\newif\ifnotes
\newcommand{\OpenBox}[1][.4pt]{\leavevmode
  \hbox{
  \hfil\vrule width#1
  \vbox to.675em{\hrule width.6em height#1\vfil\hrule height#1}%
  \vrule width#1\hfil}}
\newtheorem{lemma}{Lemma}
\newtheorem{definition}{Definition}
\newtheorem{theorem}{Theorem}
\newcommand{\eqeq}{\hspace{1mm}\raisebox{0ex}{\relsize{0}{\textbf{==}}}~}
\definecolor{vrpink}{RGB}{255,0,127}
\definecolor{vrblue}{RGB}{30,144,255}
\definecolor{vrolive}{RGB}{85,107,47}
\definecolor{vrroyalblue}{RGB}{65,105,225}
\definecolor{brgreen}{RGB}{100,200,70}
\definecolor{ivsalmon}{RGB}{255,160,122}
\definecolor{vrlpink}{RGB}{255,192,203}
\definecolor{mvcol}{RGB}{5,150,25}
\definecolor{jdgreen}{RGB}{4,99,7}
\definecolor{jdred}{RGB}{133,6,6}
\definecolor{mymaroon}{RGB}{128,0,0}
\newcommand{\comalgo}[1]{ \textit{\textcolor{mymaroon}{$\triangleleft$ #1}}}
\newcommand{\emptypath}{\emptyset_p}
\newcommand{\emptylist}{\emptyset_l}
\newcommand{\checkyes}{\textcolor{jdgreen}{\ding{51}}}
\newcommand{\checkno}{\textcolor{jdred}{\ding{55}}}
\newcommand{\nauth}{\mathit{nauth}}
\newcommand{\auth}{\mathit{auth}}
\renewenvironment{proof}[1][\proofname]{\par
  \pushQED{\qed}%
  \normalfont
  \topsep0pt \partopsep0pt 
  \trivlist
  \item[\hskip\labelsep
        \itshape
    #1\@addpunct{.}]\ignorespaces
}{%
  \endtrivlist\@endpefalse
}
\newcommand{\du}{\texttt{DolevU}\xspace} 
\newcommand{\dut}{\texttt{DolevU-T}\xspace} 
\newcommand{\dutverif}{\texttt{DolevU-T-Verif}\xspace} 
\newcommand{\sft}{\texttt{SigFlood-T}\xspace} 
\newcommand{\sftverif}{\texttt{SigFlood-T-Verif}\xspace}
\newcommand{\dualrc}{\texttt{DualRC}\xspace} 
\newcommand{\dualrcverif}{\texttt{DualRC-Verif}\xspace}
\newtheorem{mydef}{Definition}
\begin{document}

\author{
    \IEEEauthorblockN{Rowdy Chotkan\IEEEauthorrefmark{1}, Bart Cox\IEEEauthorrefmark{1}, Vincent Rahli\IEEEauthorrefmark{2}, J\'er\'emie Decouchant\IEEEauthorrefmark{1}}
    \IEEEauthorblockA{\IEEEauthorrefmark{1}\textit{Delft University of Technology}, \IEEEauthorrefmark{2}\textit{University of Birmingham} \\
    \{R.M.Chotkan-1, B.Cox, J.Decouchant\}@tudelft.nl, vincent.rahli@gmail.com
    }
}



\title{Reliable Communication in Hybrid Authentication and Trust Models}

\maketitle
\IEEEpeerreviewmaketitle

\thispagestyle{plain}
\pagestyle{plain}

\begin{abstract}
Reliable communication is a fundamental distributed communication abstraction that allows any two nodes of a network to communicate with each other. It is necessary for more powerful communication primitives, such as broadcast and consensus. Using different authentication models, two classical protocols implement reliable communication in unknown and sufficiently connected networks. In the first one, network links are authenticated, and processes rely on dissemination paths to authenticate messages. In the second one, processes generate digital signatures that are flooded in the network. This work considers the hybrid system model that combines authenticated links and authenticated processes. We additionally aim to leverage the possible presence of trusted nodes and trusted components in networks, which have been assumed in the scientific literature and in practice. We first extend the two classical reliable communication protocols to leverage trusted nodes. We then propose \dualrc, a novel algorithm that enables reliable communication in the hybrid authentication model by manipulating both dissemination paths and digital signatures, and leverages the possible presence of trusted nodes (e.g., network gateways) and trusted components (e.g., Intel SGX enclaves). We provide correctness verification algorithms to assess whether our algorithms implement reliable communication for all nodes on a given network.
\end{abstract}

\section{Introduction}

Distributed systems involve autonomous computing processes (also called nodes) that communicate using a network to perform a cooperative task. Formalization efforts have led to the definition of distributed computing abstractions that precisely state a set of properties that algorithms must provide~\cite{Cachin+Guerraoui+Rodrigues:2011}. In particular, fault-tolerant distributed computing abstractions tolerate a limited number of faulty nodes. The most general fault model is the Byzantine model, which allows a process to deviate from a specified protocol in unrestricted ways and makes them apt to deal with the real world. For example, protocols that implement the Byzantine consensus abstraction are involved in Blockchain systems~\cite{ decouchant2022damysus,Yin+Malkhi+Reiter+Gueta+Abraham:podc:2019}. 

Distributed computing abstractions can be built on top of each other, which simplifies implementations and allows for simpler proofs of correctness.  
Arguably, the simplest distributed computing abstraction is reliable communication~\cite{dolev1981unanimity, koo2004broadcast, pelc2005broadcasting} because it allows two processes to authenticate each other's messages. In an unknown and incomplete network, reliable communication relies on broadcast and requires that: (i) when a correct process broadcasts a message, then the message is authenticated (or delivered) by all correct nodes; and (ii) when a message originating from a correct process is delivered, it was indeed sent by this correct process. Reliable communication is tightly related to reliable broadcast~\cite{Bracha:iandc:1987} and provides the same properties when the original sender is correct, which leads to reliable communication being sometimes called reliable broadcast with an honest dealer.

State-of-the-art reliable communication protocols have either considered that all nodes are authenticated (i.e., can generate and verify signatures) or, instead, that all nodes rely on authenticated links~\cite{dolev1981unanimity}. However, because of the recent development of complex systems that involve a large variety of nodes, some using digital signatures and some using point-to-point authenticated communications, there is a need to revisit the reliable communication abstraction and assume hybrid models. 

Several academic works have considered hybrid process models and additional trust assumptions, i.e., trusted nodes~\cite{tseng2019reliable, tseng2020reliable} or trusted components~\cite{correia2002efficient,decouchant2022damysus, Kapitza+al:eurosys:cheapbft:2012,Veronese+al:tc:MinBFT:2013}, mostly for higher-level distributed computing abstractions such as consensus. The impact of trusted nodes and trusted components on lower-level distributed computing abstractions, such as reliable communication, has not been evaluated. 
In this work, we revisit state-of-the-art reliable communication protocols and extend them to tolerate the simultaneous presence of authenticated and non-authenticated nodes and leverage the possible presence of trusted nodes or of nodes equipped with a trusted component. While our algorithms cover a large range of possible networks that will require extensive theoretical analysis, we mostly aim to enforce reliable communication in regular hierarchical network structures. 
Tab.~\ref{tab:protocols} lists the state-of-the-art RC protocols that we consider in this paper, along with the three novel protocols that we introduce, and compares their system models and correctness conditions.

\renewcommand{\arraystretch}{1.1}
\begin{table*}[t]
\footnotesize
\centering
\resizebox{1.6\columnwidth}{!}{%
\begin{tabular}{|l|c|c|c|c|m{5cm}|}
\hline
\multicolumn{1}{|l|}{\textbf{Protocol}} & \textbf{\begin{tabular}[c]{@{}c@{}}Auth.\\links\end{tabular}} & \textbf{\begin{tabular}[c]{@{}c@{}}Auth.\\nodes\end{tabular}} &\textbf{\begin{tabular}[c]{@{}c@{}}Trust.\\nodes\end{tabular}}  & \textbf{\begin{tabular}[c]{@{}c@{}}Trust.\\cmpts\end{tabular}} & \textbf{\begin{tabular}[c]{@{}c@{}}Correctness condition\end{tabular}} \\
\hline
DolevU~\cite{dolev1981unanimity} & \checkyes & \checkno & \checkno & N/A & $2f{+}1$ vertex-connectivity  \\
\hline 
SigFlood~\cite{eugster2004epidemic} & \checkno & \checkyes & \checkno & \checkno & $f{+}1$ vertex-connectivity \\
\hline
\hline
DolevU-T (\S\ref{sec:dut}, Alg.~\ref{alg:dolev_u_t}) & \checkyes & \checkno & \checkyes & N/A & \dutverif{} (\S\ref{sec:dolev_u_t_verif_2} and \S\ref{sec:dolev_u_t_verif_1}, Alg.~\ref{alg:dolev_u_t_verif_2} or Alg.~\ref{alg:dolev_u_t_verif_1}) \\
\hline 
SigFlood-T (\S\ref{sec:sigflood_t}, Alg.~\ref{alg:floodingtrusted}) & \checkno & \checkyes & \checkyes & \checkyes & \sftverif{} (\S\ref{sec:sigflood_t_verif}, Alg.~\ref{alg:dolev_u_t_verif_2} or Alg.~\ref{alg:dolev_u_t_verif_1} with $2f{+}1$ replaced by $f{+}1$) \\
\hline 
DualRC (\S\ref{sec:dualrc}, Alg.~\ref{alg:dualrc}) & \checkyes & \checkyes & \checkyes & \checkyes & \dualrcverif{} (\S\ref{sec:dualrc})  \\
\hline
\end{tabular}
}
\vspace{2pt}
\caption{Reliable communication protocols, their system models, and correctness conditions. The three detached bottom rows are novel protocols discussed in this paper.}
\label{tab:protocols}
\end{table*}

\renewcommand{\arraystretch}{1.0}

\smallskip
\noindent In summary, this work makes the following contributions: \\
$\bullet$ We extend the state-of-the-art reliable communication protocols to leverage trusted nodes (but \emph{not} trusted components) in unknown networks under the global fault model. First, we modify Dolev's~\cite{dolev1981unanimity} reliable communication protocol, which disseminates messages across authenticated links along with their propagation paths, facilitating message authentication (\S\ref{sec:dut}). Second, we modify the flooding-based reliable communication protocol~\cite{eugster2004epidemic}---referred to as \texttt{SigFlood}. This protocol, which relies on authenticated nodes, ensures that the dissemination of a message is accompanied by the digital signature of the original sender (\S\ref{sec:sigflood_t}).  \\ 
$\bullet$ Building on these insights, we then present \dualrc: the first reliable communication protocol that supports the presence of both authenticated and non-authenticated processes, and leverages trusted nodes and trusted components (\S\ref{sec:dualrc}). This protocol disseminates messages that contain dissemination paths, signed payloads and signed dissemination paths. \dualrc{} aims at supporting the largest possible range of network topologies where all authenticated and non-authenticated nodes can communicate reliably. Interestingly, \dualrc leverages the presence of trusted components to translate signed dissemination paths into trusted signatures that vouch for the authenticity of a message.  Fig.~\ref{fig:motivation_example} shows two networks to illustrate that \dualrc implements reliable communication in more generic networks than previous algorithms. \\ 
$\bullet$ For each of our protocols, we establish the necessary and sufficient conditions that the network topology must meet for all three  protocols to enforce all properties of the reliable communication distributed abstraction. Additionally, we present algorithms that allow for the verification of these conditions and evaluate their complexity. 


\begin{figure}[t]
\centering
     \subfloat[] {
         \centering
         \includegraphics[width=.45\columnwidth]{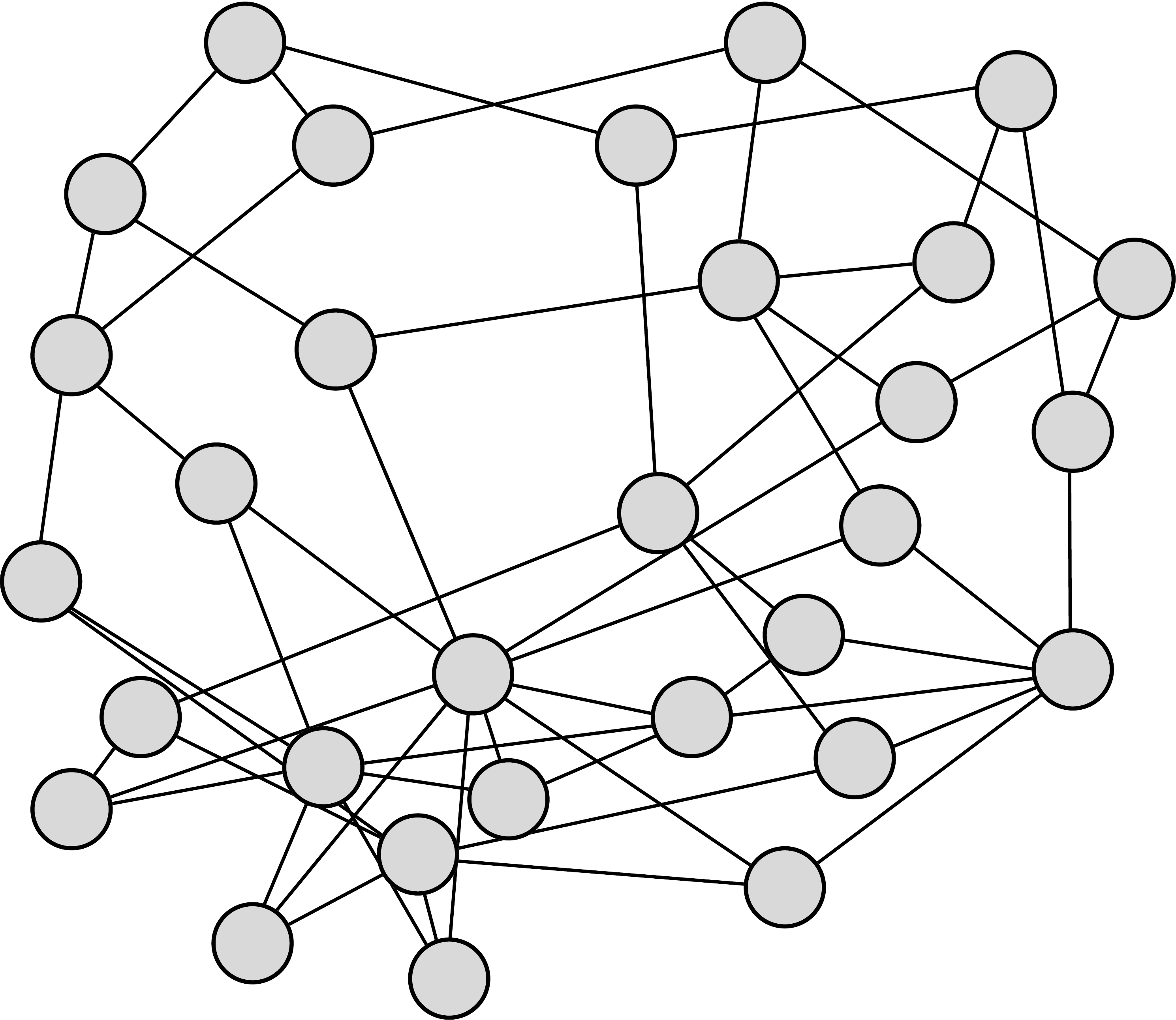}
     }
     \subfloat[] {
         \centering
         \includegraphics[width=.45\columnwidth]{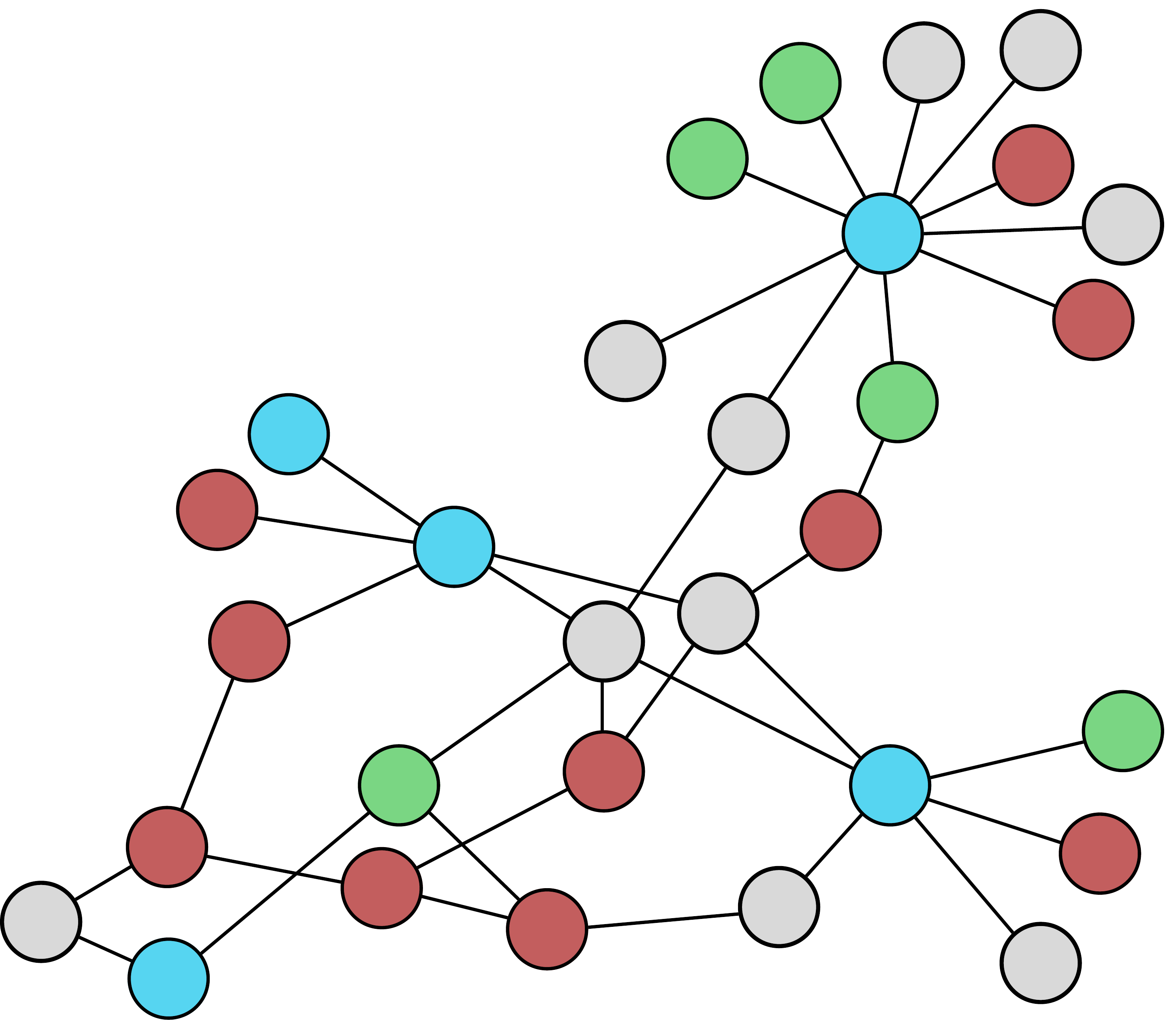}
     }
    \caption{The network shown in (a) contains 30 authenticated nodes with sufficient connectivity (3) for both SigFlood and \dualrc{} to be correct when $f=1$. The network illustrated in (b) contains 10 authenticated nodes (in red), 10 non-authenticated nodes (in grey), 5 authenticated trusted nodes (in green), and 5 non-authenticated trusted nodes (in blue). \dualrc{} is the only protocol that guarantees reliable communication in network (b).}
    \label{fig:motivation_example}

\end{figure}

\section{Models and Problem Statement}
\label{sec:sysmodel}
We consider a static set $V {=} \{p_1, p_2, \dots, p_N\}$ of $N$ processes that are identified by an integer in $[1,N]$. Processes are assumed to know the value of $N$.

\textbf{Network.} Processes are interconnected by a communication network, depicted as an undirected graph $G {=} (V,E)$ where each node $p_i {\in} V$ corresponds to a process, and each edge $e_{ij} {\in} E$ denotes a unicast communication channel between two processes. In this paper, we use `process' and `node' interchangeably, i.e., we assimilate a process to the network vertex that hosts it. We assume that the network topology is not known to nodes and remains static. However, processes know their direct neighbors and their identity. Two nodes that are not directly connected by an edge $e_{ij} {\in} E$ have to rely on other, potentially Byzantine, nodes to relay their messages.
Communication channels are authenticated, ensuring that only the two nodes at their extremities can use them to transmit messages. The channels are also reliable, meaning that messages are not altered or lost during transmission. Finally, the channels are asynchronous, with unbounded transmission delays.
Tab.~\ref{tab:notations} summarizes the graph-related notations we use in this paper.

\begin{table*}
\centering
\footnotesize
\begin{tabular}{|r|p{.8\textwidth}|}
\hline
\multicolumn{1}{|r|}{\textbf{Notation}} & \multicolumn{1}{|l|}{\textbf{Definition}} \\
\hline
$G$ & An undirected graph. \\
$\Gamma_G(u)$ & The neighbors of node $u$ in graph $G$. \\
$\kappa_G(u,v)$ & Vertex-connectivity between two nodes $u$ and $v$ in graph $G$. \\
$f(G)$ & The graph whose vertices are $G$'s untrusted nodes and whose edges connect the untrusted nodes of $G$ that are neighbors in $G$ or connected by a path of trusted nodes in $G$ (cf. Def.~\ref{def:fg}). \\
$g(G,f(G),u)$ & The graph $f(G)$ to which node $u \in G$ has been added along with its edges (direct connections with nodes in $f(G)$, or connection to them through a path of trusted nodes in $G$) (cf. Def.~\ref{def:gfg}). \\ 
\hline
$\emptypath$ & An empty path. \\
$\emptylist$ & An empty list. \\
$\sigma_i(m,p_s)$ & Signature of process $p_i$ on payload data. $m$ broadcast by process $p_s$. \\
$\sigma_i(pa, m, p_s)$ & Signature of process $p_i$ on a dissemination path $pa$, and  payload data $m$ broadcast by process $p_s$. \\
$\sigma_{{TC}_i}(m,p_s)$ & Signature of process $p_i$'s trusted component on payload data $m$ broadcast by process $p_s$. \\
$\texttt{isTrusted}(p_i)$ & Utility function that indicates whether process $p_i$ is trusted. \\
$\texttt{isTC}(\sigma_i(m, p_s))$ & Utility function that indicates whether a signature $\sigma_i(m, p_s)$ was generated by process $p_i$'s trusted component on payload data $m$ broadcast by $p_s$. \\
\hline
\end{tabular}
\vspace{2pt}
\caption{Notations and definitions.}
\label{tab:notations}
\end{table*}

\textbf{Channels interface.} We assume that
processes have access to a virtual interface $\mathit{al}$ that
abstracts their communication channels\footnote{$\mathit{al}$ is an
acronym for \emph{authenticated channel}.}. The virtual interface
$\mathit{al}$ exposes two functions, $\texttt{Send}$ and
$\texttt{Receive}$. A node $p_i$ can send a message $m$ to a
neighbor~$p_j$ by calling function $\langle\mathit{al},\texttt{Send} {\mid} {p_j,m}\rangle$. Similarly, an incoming message $m$ sent by a neighbor $p_j$ triggers a function $\langle\mathit{al},\texttt{Receive}{\mid}{p_j,m}\rangle$ that node $p_i$ executes. 

\textbf{Objective.} We aim at implementing the reliable communication
(RC) abstraction. RC allows any two correct processes to exchange
messages and authenticate them. The interface of an RC protocol contains two functions, $\langle\texttt{RC-broadcast} {\mid} {m}\rangle$ and
$\langle\texttt{RC-deliver} {\mid} {m}\rangle$, that respectively broadcast a message $m$ and transfer a message $m$ to a higher-level application. In an unknown network, RC is implemented via broadcasting to ensure that a message reaches all nodes, including its intended recipient who is the only node to deliver it. We call \emph{broadcaster} the initial sender of a message. Def.~\ref{def:rc} defines the RC distributed computing abstraction formally.

\begin{mydef} \label{def:rc}
A reliable communication protocol ensures that the following three properties hold:\\
\textbf{RC-No duplication:} A correct process RC-delivers a message at most once. \\
\textbf{RC-No creation:} If a correct process RC-delivers a message, then it was RC-broadcast. \\
\textbf{RC-Validity:} If a correct process RC-broadcasts a message, then all correct processes eventually RC-deliver it. 
\end{mydef}

\textbf{Node categories.} Fig.~\ref{fig:venn_diagram} illustrates the node categories we consider. We divide the set $V$ of the nodes into four subsets ($V_{auth}$, $V_{n.auth}$, $V_{t}$, and $V_{tc}$) whose composition is known to all nodes and that are defined as follows. 

The set $V$ of all nodes is divided into two complementary sets: $V_{auth}$ and $V_{n.auth}$. The former, $V_{auth}$, contains authenticated nodes capable of manipulating digital signatures, while the latter, $V_{n.auth}$, includes non-authenticated nodes that cannot use digital signatures. Non-authenticated nodes, therefore, rely entirely on authenticated links to authenticate the messages they receive and allow others to authenticate their messages.

We define $V_t {\subseteq} V$ as the set of trusted nodes, which are assumed to be correct. Trusted nodes always strictly follow the protocol specification they are given~\cite{abbas2017improving,bhawalkar2015preventing,navon2020mixed,tseng2019reliable}. They are, however, not necessarily authenticated. 

Finally, we define $V_{tc} {\subseteq} V$ as the set of nodes that are equipped with a trusted component (TC). Note that $V_t {\cap} V_{tc}$ is not necessarily empty, which means that a trusted node might also be equipped with a trusted component. TCs are able to store some data and execute code within a secluded area that their host operating system cannot access. TCs, however, rely on their host to interact with the network and to be provided with CPU cycles. TC-augmented nodes are strictly considered to be authenticated, i.e., $V_{tc} {\subset} V_{\auth}$. This is in line with modern TCs, which are usually authenticated using a remote attestation procedure that employs cryptographic
primitives~\cite{menetrey2022attestation}. 

\textbf{Fault model.}
We assume that up to $f$ out of the $N$ nodes are Byzantine. Faulty nodes
exclusively belong to $V {\setminus} V_{t}$. A faulty node might
deviate from a protocol specification in unrestricted ways but cannot break cryptographic primitives. For example, a faulty node can drop or modify messages it is expected to send. When equipped with a TC, a faulty node might not interact with it as intended.

\begin{figure}[tbp]
    \centering
    \includegraphics[width=0.7\columnwidth]{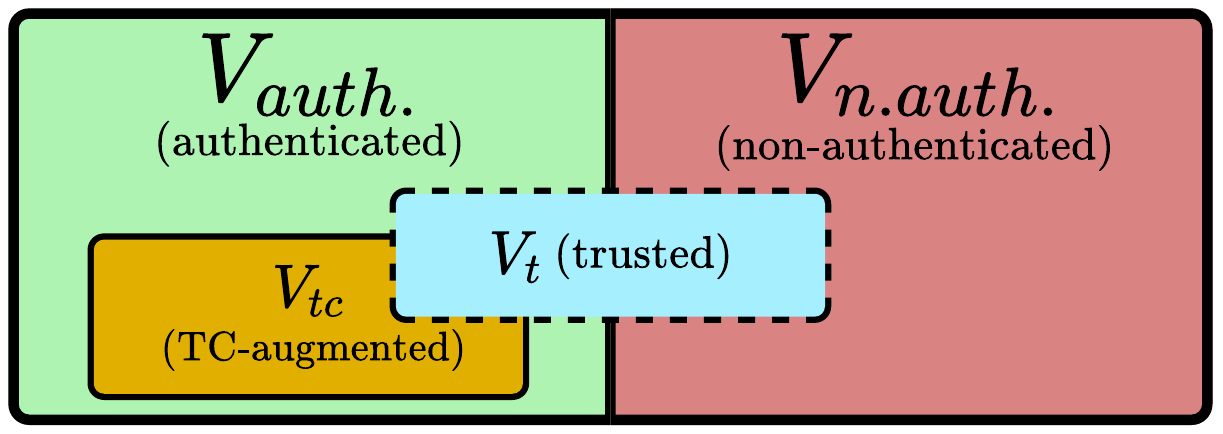}
    \caption{Taxonomy of processes.}
    \label{fig:venn_diagram}
\end{figure}

\textbf{Trusted component.} In our final algorithm, \dualrc{}, a trusted component can be used to do two things. First, it can verify a broadcaster's signature on a payload and generate its own signature to attest that it authenticated the payload. Second, it can receive $f{+}1$ signed dissemination paths that relate to a given payload as input, verify that the paths are vertex-disjoint, and then generate a signature to attest that the payload has been authenticated based on signed paths. We do not formally define our trusted component for space reasons but it could be implemented with existing trusted execution environments (e.g., with Intel SGX).   

\section{\dut: Reliable Communication in the Authenticated Link Model with Trusted Nodes}
\label{sec:dut}

We first consider the case where nodes only rely on authenticated links and do not use digital signatures, i.e., $V_{\auth} {=} \emptyset$ and $V_{n.auth} {=} V$. In this model, $V_{tc} {=} \emptyset$ because trusted components typically require digital signatures, but $V_t \subseteq V_{n.auth}$ is not necessarily empty (i.e., some nodes might be trusted).  

We describe our first protocol, \dut{}, which extends \du, the state-of-the-art reliable communication protocol with authenticated links, to leverage trusted nodes.
Interestingly, \du requires the network to be $(2f{+}1)$-connected, while \dut{} functions also correctly on more generic networks in the presence of trusted nodes, as App.~\ref{appx:example_dut} illustrates with an example. We, therefore, characterize the networks on which \dut{} is correct and provide two algorithms that take a network topology as input and indicate whether \dut{} allows any pair of nodes to receive and authenticate each other's messages (i.e., whether it enforces RC). 

\subsection{Description of \dut{}}

\RestyleAlgo{boxruled}
\LinesNumbered
\begin{algorithm}
\begin{algorithmic}[1]
\footnotesize 	
\State \textbf{Parameters:}
\State \hspace*{1em} $f$ : max. number of Byzantine processes in the system.
\State \textbf{Uses:} 
\State \hspace*{1em} $\bullet$ Auth. async. perfect point-to-point links, instance \textit{al}. 
\State \hspace*{1em} $\bullet$ $\Gamma_G(p_j)$ returns the list of $p_j$'s neighbors. 
\item[]

\State \underline{\textbf{upon event} $\langle \dut{}, \texttt{Init} \rangle$ \textbf{do}}
    \State \hspace*{1em} $delivered = \textbf{False}$ \comalgo{to deliver at most once}
    \State \hspace*{1em} $paths = \emptyset$ \comalgo{received dissemination paths}
\item[]

\State \underline{\textbf{upon event} $\langle \dut{}, \texttt{Broadcast}\ |\ m \rangle$ \textbf{do}}
    \State \hspace*{1em} \textbf{forall} $p_j \in \Gamma_{G}(p_i)$ \textbf{do} 
        \State \hspace*{2em} $\langle al, \texttt{Send}\ |\ p_j, [m, \emptypath] \rangle$ \label{alg:dolev-init-msg}
    \State \hspace*{1em} $\langle \dut{}, \texttt{Deliver}\ |\ m \rangle$
    \State \hspace*{1em} $delivered = \textbf{True}$
\item[]

\State \underline{\textbf{upon event} $\langle al, \texttt{Receive}\ |\ p_j, [m, path] \rangle$ \textbf{do}} \label{deliverMsg1}
    \State \hspace*{1em}  $fwd\_path = path + [p_j]$ \label{alg:dolev-augment-path}
    \State \hspace*{1em}  \textcolor{blue}{$path = \texttt{remove\_all\_trusted}(fwd\_path)$} \label{line:simplify}
    \State \hspace*{1em}  $paths.\texttt{insert}(path \setminus \{p_j\})$ \label{alg:dolev-insert-path}
    \State \hspace*{1em} \textbf{forall} $p_k \in \Gamma_G(p_i) \setminus path$ \textbf{do} \label{alg:dolev-forward}
        \State \hspace*{2em} $\langle al, \texttt{Send}\ |\ p_k, [m, fwd\_path] \rangle$ \label{line:forward}
\item[]

\State \underline{\textbf{upon event} ($paths$ contains $[\ ]$ (empty path) } \underline{\textbf{or} $f{+}1$ vertex-disjoint paths)} \underline{\textbf{and} $(\textbf{not}\ delivered)$ \textbf{do}} \label{alg:dolev-disj-paths}
    \State \hspace*{1em} $\langle \dut{}, \texttt{Deliver}\ |\ m \rangle$ 
    \State \hspace*{1em} $delivered = \textbf{True}$
\end{algorithmic}
\vspace{2mm}
\caption{\dut: Reliable communication in $(2f{+}1)$-connected networks at process $p_i$ with trusted processes in the authenticated link model.} 
\label{alg:dolev_u_t}
\end{algorithm}

Alg.~\ref{alg:dolev_u_t} presents \dut{}'s pseudocode.
Messages contain two fields: the payload data $m$ that is being broadcast and the dissemination path $path$.
To broadcast, a node sends a given payload with an empty path to its
neighbors (Alg.~\ref{alg:dolev_u_t}, l.~\ref{alg:dolev-init-msg}), and
delivers it. Upon receiving a message, a process $p_i$ augments the received path variable with the sender's ID
(Alg.~\ref{alg:dolev_u_t}, l.~\ref{alg:dolev-augment-path}). It then forwards the message to its neighbors that are not included in the received path
(Alg.~\ref{alg:dolev_u_t}, l.~\ref{line:forward}). A process stores the
paths it receives (Alg.~\ref{alg:dolev_u_t},
l.~\ref{alg:dolev-insert-path}) and delivers a message if it was either received directly from the initial broadcaster or if it can identify $f{+}1$ vertex-disjoint
paths among those it received (Alg.~\ref{alg:dolev_u_t},
l.~\ref{alg:dolev-disj-paths}). Receiving a payload through $f{+}1$
vertex-disjoint paths authenticates it because there are at most $f$
faulty nodes in the network. This implies that at least one of the
$f{+}1$ disjoint paths through which the message traveled involved only
correct nodes. 

\dut{} removes trusted nodes
from a received path (Alg.~\ref{alg:dolev_u_t},
l.~\ref{line:simplify}), before storing it in a $paths$ set and
attempting to identify $f{+}1$ vertex-disjoint paths. One should note
that trusted nodes are not removed from the path that a node transmits
to its neighbors (Alg.~\ref{alg:dolev_u_t}, l.~\ref{line:forward}). Removing them from transmitted paths would increase the total number of messages sent: a node that receives a path might forward a message to all its neighbors not included in it (Alg.~\ref{alg:dolev_u_t}, l.~\ref{alg:dolev-forward}).

Removing trusted nodes from paths has interesting practical consequences. First, it reduces the size and the number of paths a node stores and manipulates.
This is beneficial given that identifying disjoint paths
has an exponential time complexity. Second, paths containing a shared
list of trusted nodes might become disjoint. Finally, it might also generate an empty path from a path that contains only trusted
nodes, which accelerates the delivery of a message through a protocol
optimization presented in the next section.

\subsection{Optimizations}
\label{sec:bonomi_modifications}

The worst-case complexity of the original \du protocol is high, both in terms of message and computational complexity. Bonomi et al. presented five modifications of \du, which we refer to as MD.1--5, that reduce the number of messages transmitted along with their size in practical executions~\cite{bonomi2019multi}:

\begin{description}
\item[MD.1] If a process $p$ receives a message directly from the source $s$, then $p$ directly delivers it.

\item[MD.2] If a process $p$ has delivered a message, then it discards all its related paths and relays the message with an empty path to all its neighbors.

\item[MD.3] A process $p$ relays a path related to a message only to the neighbors that have not delivered it (i.e., sent an empty path).

\item[MD.4] If a process $p$ receives a message with an empty path from a neighbor $q$, then $p$ stops relaying and analyzing any path for that message that contains $q$.

\item[MD.5] A process $p$ stops relaying further paths related to a message after it has delivered it and has forwarded it with an empty path. 
\end{description}

\dut{} can make use of
MD.1--5, but these modifications are left out of Alg.~\ref{alg:dolev_u_t} for
simplicity. They, however, appear in the pseudocode of \dualrc{}, our
final protocol. \dut{} also leverages additional optimizations that
respectively prevent duplicate transmissions of a given message on a link (MBD.1 in~\cite{bonomi2021practical}), and allow nodes to ignore messages that contain a \textit{superpath} of a previously received path ((MBD.10 in~\cite{bonomi2021practical}). These two optimizations are also not
shown in the pseudocode, as their implementation is straightforward.

\subsection{Proofs}

We start by proving that \dut enforces \emph{RC-no duplication} and \emph{RC-no creation}. 

\begin{lemma} \label{lemma:1}
\dut{} maintains \textbf{RC-no duplication} and \textbf{RC-no
  creation}.
\end{lemma}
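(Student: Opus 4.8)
The plan is to prove the two properties separately, each by direct inspection of Alg.~\ref{alg:dolev_u_t}. For \textbf{RC-no duplication} I would argue from the \emph{delivered} flag. A correct process executes $\langle\dut{},\texttt{Deliver}\mid m\rangle$ in only two places: inside the \texttt{Broadcast} handler, and inside the handler at l.~\ref{alg:dolev-disj-paths}. In the first case $\mathit{delivered}$ is set to \textbf{True} immediately after the delivery; in the second case the handler is guarded by $\textbf{not}\ \mathit{delivered}$ and again sets it to \textbf{True}. Since event handlers run atomically and $\mathit{delivered}$ is never reset, once it becomes \textbf{True} no further delivery can occur, and if the process is itself the broadcaster the guard at l.~\ref{alg:dolev-disj-paths} is permanently disabled after the \texttt{Broadcast} handler runs. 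Hence each instance delivers its message at most once (assuming, as the pseudocode does, one protocol instance per broadcast message).

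For \textbf{RC-no creation} I would do a case analysis on where the delivery happens. If a correct process delivers $m$ inside its \texttt{Broadcast} handler, then it itself RC-broadcast $m$, and we are done. Otherwise it delivers through l.~\ref{alg:dolev-disj-paths}, which requires its $paths$ set to contain either (i) $f{+}1$ vertex-disjoint paths for $m$, or (ii) the empty path for $m$. For case (i) I would use the standard quorum argument: the stored paths consist only of untrusted nodes (trusted ones are stripped on l.~\ref{line:simplify}), there are at most $f$ Byzantine nodes, and the $f{+}1$ paths are pairwise vertex-disjoint, so at least one of them contains no Byzantine node; re-inserting the trusted nodes that were removed from it (all correct by assumption) yields a route from the broadcaster to the delivering process made entirely of correct processes.

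Along such an all-correct route I would induct backward on the $path$ field. The broadcaster emits $[m,\emptypath]$ at l.~\ref{alg:dolev-init-msg}, so by the update at l.~\ref{alg:dolev-augment-path} the first element of any transmitted $path$ is the broadcaster, and every subsequent node on the route appended itself to the $path$ only because it had executed the \texttt{Receive} handler at l.~\ref{deliverMsg1} on $[m,\cdot]$ forwarded by its predecessor (l.~\ref{line:forward}). Since a correct node emits an empty-path message \emph{only} inside \texttt{Broadcast}, following the route back to its origin shows that origin RC-broadcast $m$. Case (ii) is analogous: an empty stored path means $m$ reached the delivering process either directly from a neighbor that sent $[m,\emptypath]$, or through a chain of nodes all stripped on l.~\ref{line:simplify}, i.e.\ all trusted and hence correct, with the broadcaster as the path's first element; tracing back as before gives that the broadcaster RC-broadcast $m$ (a message whose claimed broadcaster is Byzantine is vacuously outside the scope of RC-no creation, since a Byzantine node's broadcast behaviour is unconstrained).

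I expect the main obstacle to be making case (ii) and this backward induction precise, in particular arguing that the trusted-node removal at l.~\ref{line:simplify} is \emph{sound}: it can only turn a genuine dissemination route into a shorter genuine route, never manufacture an empty path or a spurious disjoint path that a Byzantine node could exploit, precisely because the removed nodes are trusted, hence correct, hence really did relay $m$ according to the \texttt{Receive} handler. The remaining ingredient — that a correct relay never fabricates $m$ or an empty path out of nothing — is immediate from the code, since a correct node's only message emissions are the initial $[m,\emptypath]$ in \texttt{Broadcast} and the forwarded $[m,\mathit{fwd\_path}]$ with $\mathit{fwd\_path}\neq\emptypath$ in the \texttt{Receive} handler.
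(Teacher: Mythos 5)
Your proposal is correct and follows essentially the same route as the paper's own (much terser) proof: the $delivered$ flag gives \emph{RC-no duplication}, and \emph{RC-no creation} follows from the quorum argument that one of the $f{+}1$ vertex-disjoint paths contains only correct nodes, with the stripping of trusted nodes at l.~\ref{line:simplify} being sound precisely because trusted nodes are correct. Your additional backward induction along an all-correct route and the explicit treatment of the empty-path case merely make explicit what the paper inherits from the correctness of \du{}.
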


\begin{proof}
\textit{RC-no duplication} is guaranteed by the use of the $delivered$ variable.
In \du, a node delivers a message when it receives it through $f{+}1$ vertex-disjoint paths, which guarantees that at least one of those paths only involves correct nodes. The presence of trusted nodes in a path does not affect whether all the nodes it contains are correct; therefore, trusted nodes can be removed from a path used to attempt to authenticate a message, which proves \textit{RC-no creation}. 
\hfill $\OpenBox[1pt]$
\end{proof}
\vspace{1mm}

Note that \dut{}, like \du{}, does not guarantee \textit{RC-Validity} on all graphs. While it is known that \du{} requires a $2f{+}1$-connected communication graph, \dut{}'s requirements are more complex. \S\ref{sec:dolev_u_t_verif_2} and~\ref{sec:dolev_u_t_verif_1} provide algorithms that determine whether \dut{} provides \textit{RC-Validity} on a given graph, and therefore implements reliable communication on it. 

We now describe two methods one can use on a given graph $G=(V,E)$ to validate whether \dut{} provides \textit{Validity}, and therefore implements reliable communication. 
Our correctness verification algorithms rely on two functions $f(\cdot)$ and $g(\cdot)$ that we first define intuitively. Function $f(G)$ takes a graph $G$ as input and returns a new graph containing only the untrusted nodes of $G$. Furthermore, an edge is added between any two untrusted nodes if they are directly connected as neighbors in $G$ or if they are connected by a path of trusted nodes in $G$. (Also see Def.~\ref{def:fg}). Function $g(G, G', u_t)$ inserts a trusted node $u_t \in V_t$ in a subgraph $G'$ of a graph $G$ and adds an edge in $G'$ between $u_t$ and a node of $G'$ if they are neighbors or if they are connected by a path of trusted nodes in $G$. 

\setlength\abovedisplayskip{0pt}

\begin{definition} \label{def:fg}
Let $f$ be the function that takes a graph $G=(V,E)$ as input and outputs
the graph $f(G){=}(V_{0}, E_{0})$ such that: \\
$\bullet$ $V_{0} = V_{n.auth} {\setminus} V_t$ \\
$\bullet$ 
$\begin{array}[t]{l}
   \hspace*{-0.05in}\forall (u,v) \in V_{0}^2,\ 
    \big( (u, v) \in E \big)  \lor \\
     \Big( \exists u_{t_0},\cdots,u_{t_n} \in V_t.
       (u, u_{t_0}) \in E \land (u_{t_n}, v) \in E \\
       \land \forall i < n. (u_{t_i}, u_{t_{i+1}}) \in E
        \Big{)} 
    \Leftrightarrow (u, v) \in E_{0} \\
\end{array}$

\end{definition}

\noindent In practice, the edges of a node $u \in V_0$ can be computed with a
breadth-first traversal in~$G$ starting from $u$ and following trusted
edges. However, a faster approach to determine~$E_0$ consists in
computing the connected components of the graph whose vertices are $G$'s vertices and whose edges are the edges from $G$ that connect two trusted nodes. In such a graph, two untrusted nodes
 belong in the same connected component iff. they are connected in $f(G)$.

The function $g(.)$ is defined as per Definition~\ref{def:gfg}.
\begin{definition}
Let $g$ be the function that takes as input a graph $G$, a subgraph
$G'{=}(V' {\subseteq} V, E' {\subseteq} V^2)$ of $G$, and a trusted
node $v_t {\in} V_t$, and outputs a graph $g(G, G', v_t) {=} (V_1,
E_1)$ such that:
\begin{itemize}
\item $V_1 = V' \cup \{v_t\}$
\item
  $\forall (u,v) \in E',\ (u, v) \in E_1$ \hfill (i.e., $E'\subseteq{E_1}$)
\item
$\begin{array}[t]{l}
  \hspace*{-0.05in}\forall u \in V', 
    \big( (u,v_t) \in E \big) \lor \\
    \Big( \exists u_{t_0},\cdots,u_{t_n} \in V_t,\ (u, u_{t_0}) \in E  \land (u_{t_n}, v_t) \in E \\
      \land \forall i < n, (u_{t_i}, u_{t_{i+1}}) \in E
       \Big{)}
    \Leftrightarrow (u,v_t) \in E_1
\end{array}$
\end{itemize}
\label{def:gfg}
\end{definition}

\subsubsection{Method 1: Max-Flow on Transformed Graph}
\label{sec:dolev_u_t_verif_2}

\begin{algorithm}[ht]
\caption{Max-flow-based verification of \dut's \emph{RC-Validity} on a graph (\dutverif{}, method 1).}
\label{alg:dolev_u_t_verif_2}

\begin{algorithmic}[1]
\footnotesize 	
\State \textbf{Inputs:}
\State \hspace*{1em} $G=(V,E)$: undirected network topology.
\State \hspace*{1em} $V_t \subset V$: set of trusted nodes in $G$.
\State \hspace*{1em} $f$: max. number of Byzantine nodes.
\State \textbf{Output:}
\State \hspace*{1em} A boolean that indicates whether \dut is correct on $G$. 
\item[]

    \State \comalgo{Split and add all nodes to new directed graph}
    \State $dG = \mathrm{DiGraph}(\emptyset)$ \comalgo{Init. empty unweighted directed graph} \label{l:dgBegin}
    \State \textbf{forall} $u \in V$ \textbf{do}
            \State \hspace*{.5em} $dG.\mathrm{add\_vertex}(2u)$ \comalgo{\emph{in} node}
            \State \hspace*{.5em} $dG.\mathrm{add\_vertex}(2u+1)$ \comalgo{\emph{out} node}
            \State \hspace*{.5em} \textbf{if} $u \in V_t$ \textbf{then}
                $dG.\mathrm{add\_weighted\_edge}((2u, 2u+1), 2f+1)$
            \State \hspace*{.5em} \textbf{else}
                $dG.\mathrm{add\_weighted\_edge}((2u, 2u+1), 1)$
            \label{l:dgEnd}
    \item[]

    \State \comalgo{Add directed weighted edges to directed graph}
    \State \textbf{forall} $(u,v) \in E$ \textbf{do} \label{l:dgEdgeBegin}
        \State \hspace*{.5em} \textbf{if} $u \in V_t$ \textbf{then}
            $dG.\mathrm{add\_weighted\_edge}(2u+1, 2v, 2f+1)$
        \State \hspace*{.5em} \textbf{else}
            $dG.\mathrm{add\_weighted\_edge}(2u+1, 2v, 1)$
        \State \hspace*{.5em} \textbf{if} $v \in V_t$ \textbf{then}
            $dG.\mathrm{add\_weighted\_edge}(2v+1, 2u, 2f+1)$
        \State \hspace*{.5em} \textbf{else}
            $dG.\mathrm{add\_weighted\_edge}(2v+1, 2u, 1)$ \label{l:dgEdgeEnd}
    \item[]

\State \comalgo{Find out if any nodes $u$ and $v$ can communicate reliably}
\State \textbf{forall} $u \in V$ \textbf{do}
    \State \hspace*{.5em} \textbf{forall} $v \in V$ s.t. $u < v$ \textbf{do} 

        \State \hspace*{1em} \comalgo{a) Are $u$ and $v$ connected by an undirected trusted path?} \label{l:startconnected}
        \State \hspace*{1em} \textbf{if} ($u \notin V_t \land v \notin V_t \land v \in \Gamma_{f(G)}(u)$)
        \State \hspace*{2.3em} \textbf{or} ($u \in V_t \land v \notin V_t \land v \in \Gamma_{g(G,f(G),u)}(u)$)
        \State \hspace*{2.3em} \textbf{or} ($u \notin V_t \land v \in V_t \land u \in \Gamma_{g(G,f(G),v)}(v)$)
        \State \hspace*{2.3em} \textbf{or} ($u \in V_t  \land v \in V_t \land u \in \Gamma_{g(G,g(G,f(G),u),v)}(v)$) \textbf{then}
            \State \hspace*{1.5em} \textbf{continue} \label{l:endconnected}

    \item[]
    \State \hspace*{1em} \comalgo{b) Otherwise, do they have enough vertex-disjoint paths?}
    
   \State \hspace*{1em} \textbf{if} \textbf{not} \texttt{isTrusted}(u) \textbf{then}
       \State \hspace*{1.5em} \textbf{forall} $w \in \Gamma_G(u)$ \textbf{do}
            \State \hspace*{2em}
            $dG.\mathrm{change\_edge\_weight}(2u+1, 2w, 2f+1)$ \label{line:modifyWeights}
    \item[]
    \State \hspace*{1em} \textbf{if} $\mathrm{max\_flow}(dG, 2u+1, 2v) < 2f{+}1$ \textbf{then}
        \State \hspace*{1.5em}
        \textbf{return} \textsc{False} \label{l:returnFalse}
    \item[]
    \State \hspace*{1em} \textbf{if} \textbf{not} \texttt{isTrusted}(u) \textbf{then}
       \State \hspace*{1.5em} \textbf{forall} $w \in \Gamma_G(u)$ \textbf{do}
            \State \hspace*{2em}
            $dG.\mathrm{change\_edge\_weight}(2u+1, 2w, 1)$ \label{line:resetWeights}

\item[]

\State \textbf{return} $\textsc{True}$
\end{algorithmic}
\end{algorithm}

We now describe our first method which verifies whether \dut{} ensures reliable communication between any two nodes on a given network topology. Alg.~\ref{alg:dolev_u_t_verif_2} describes the corresponding pseudocode. This algorithm leverages the fact that the maximum number of edge-disjoint paths between two nodes $u$ and $v$ is equal to the maximum flow between them. It then modifies a graph $G$ into a graph $dG$ where a sufficient maximum flow between two nodes establishes their ability to communicate reliably using \dut{} (i.e., its \textit{RC-Validity}).

As we are interested in the number of vertex-disjoint paths between nodes, we must first modify the input graph $G$. Indeed, max-flow algorithms require edges to have a capacity and may use an edge up to its capacity, but several flows might cross a vertex. We, therefore, transform the graph using the node-splitting technique~\cite{ford1962flows}, which leads to the creation of a directed graph $dG$. More precisely, each node $u \in G$ is first split into two nodes~$u_{\mathrm{in}}$ and~$u_{\mathrm{out}}$, which are connected by a directed edge $(u_{\mathrm{in}}, u_{\mathrm{out}})$ in $dG$. Then, an undirected edge $(u,v)$ of~$G$ leads to the creation of two directed edges $(u_{\mathrm{out}}, v_{\mathrm{in}})$ and $(v_{\mathrm{out}}, u_{\mathrm{in}})$ in $dG$. In addition, all directed edges of $dG$ are initially given a unitary capacity. We say that a node in $dG$ is trusted if it has been created by splitting a trusted node of $G$. The capacity of all edges in~$dG$ that originate from a trusted node is set to $2f{+}1$. 

\begin{theorem} \label{thm:4}
\dut{} ensures \textit{RC-Validity} in $G$ iff. any two nodes $u, v \in G$ are either neighbors in $G$ or if the flow from $u_{\mathrm{out}}$ to $v_{\mathrm{in}}$ is larger than $2f{+}1$ in $dG$.
\end{theorem}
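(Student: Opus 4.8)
The plan is to reduce \textit{RC-Validity} to a graph-theoretic reachability condition and then translate that condition into a statement about flows in $dG$, using node-splitting and Menger's theorem. By Lemma~\ref{lemma:1}, \emph{RC-no duplication} and \emph{RC-no creation} hold on every graph, so only \textit{RC-Validity} is at stake. First I would pin down which correct nodes eventually deliver: in \dut{} a correct node delivers $m$ once it has accumulated either an empty path or $f{+}1$ pairwise vertex-disjoint stored paths (Alg.~\ref{alg:dolev_u_t}, l.~\ref{alg:dolev-disj-paths}), a stored path being the message's route with its trusted nodes deleted (l.~\ref{line:simplify}); and, using the MD.2 optimisation \dut{} may apply (a node that has delivered re-disseminates $m$ with an empty path --- which is needed for the characterisation to match Alg.~\ref{alg:dolev_u_t_verif_2}), an empty path reaches every neighbour of $b$, hence every node joined to $b$ by a path of trusted nodes, and more generally every neighbour or trusted-path successor of a node that has already delivered. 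Thus delivered correct nodes act as fresh broadcasters. The key structural claim, for a fixed admissible Byzantine set $F$ ($F\cap V_t=\emptyset$, $|F|\le f$, $b,v\notin F$), is: a correct recipient $v$ eventually delivers $m$ broadcast by correct $b$ iff $G$ contains $f{+}1$ $b$--$v$ paths that avoid $F$ and are pairwise vertex-disjoint up to shared trusted nodes (this includes the case of a single all-trusted $b$--$v$ path). Since correct nodes forward forever and channels are reliable, the corresponding stored paths eventually reach $v$ if such paths exist; the converse requires an induction on the order in which nodes deliver, and this induction is the crux of the proof.

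Next I would establish the flow correspondence. In $dG$, node $x$ contributes the edge $x_{\mathrm{in}}{\to}x_{\mathrm{out}}$ of capacity $1$ if $x\notin V_t$ and $2f{+}1$ if $x\in V_t$; each original edge becomes two arcs, of capacity $2f{+}1$ when leaving a trusted node and $1$ otherwise; and for the source $b$, if $b\notin V_t$ its out-arcs are raised to $2f{+}1$ --- precisely the reweighting of l.~\ref{line:modifyWeights}, which models that $b$ sends $m$ directly to all its neighbours (l.~\ref{alg:dolev-init-msg}), each of which delivers at once. By the Ford--Fulkerson path decomposition together with node-splitting, an integral flow of value $k$ from $b_{\mathrm{out}}$ to $v_{\mathrm{in}}$ corresponds to $k$ $b$--$v$ paths in $G$ that are vertex-disjoint except possibly at trusted nodes and with an unconstrained first hop out of $b$, and conversely $k$ such paths give a flow of value at least $k$: a saturated untrusted node-edge lies on exactly one flow path, a trusted node-edge carries up to $2f{+}1$ of them, and residual augmentations preserve this decomposition.

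For the ($\Leftarrow$) direction, assume every non-adjacent pair of nodes has flow at least $2f{+}1$. Take correct $b$ and correct $v$; if they are neighbours, $v$ receives the empty path from $b$ and delivers (l.~\ref{alg:dolev-init-msg}--\ref{line:forward}), so assume not. Let $F$ be the actual Byzantine set. Deleting from $dG$ the node-edges of the nodes of $F$ --- all untrusted, of capacity $1$ --- lowers the max flow $b_{\mathrm{out}}{\to}v_{\mathrm{in}}$ by at most $|F|\le f$, so it remains at least $f{+}1$; decomposing the residual flow yields $f{+}1$ $b$--$v$ paths in $G\setminus F$, disjoint up to shared trusted nodes. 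Every node on these paths is correct, so by induction along each path the message with its route reaches $v$, and after deleting trusted nodes $v$ obtains the empty path or $f{+}1$ pairwise vertex-disjoint stored paths; hence $v$ delivers (l.~\ref{alg:dolev-disj-paths}). Channel reliability turns ``reaches'' into ``eventually'', so \textit{RC-Validity} holds.

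For ($\Rightarrow$) I argue the contrapositive: suppose some non-adjacent pair $(b,v)$ has max flow at most $2f$ in $dG$. By max-flow--min-cut there is a cut of capacity at most $2f$, and by the standard reduction it can be taken to consist of node-edges; since every trusted node-edge and every out-arc of $b$ has capacity $2f{+}1>2f$, it uses only untrusted node-edges, i.e.\ it is a set $C$ of at most $2f$ untrusted nodes, with $b,v\notin C$, separating $b$ from $v$ in $G$. The adversary corrupts $\min(f,|C|)$ of these nodes and lets them forward nothing. If $|C|\le f$ the corrupted nodes disconnect $b$ from $v$ and $v$ never receives $m$. If $f<|C|\le 2f$, every $b$--$v$ route crosses one of the $|C|-f\le f$ uncorrupted cut nodes, so $v$ can gather at most $f$ pairwise vertex-disjoint trusted-trimmed stored paths; and since $C$ is untrusted and separating, there is no all-trusted $b$--$v$ path and, by the delivery characterisation and an induction on delivery order, no correct node on $v$'s side of $C$ ever reaches the $f{+}1$-paths threshold, so none delivers and re-broadcasts an empty path that would help $v$. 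Either way $v$ never delivers though $b$ is correct, so \textit{RC-Validity} fails. The main obstacle throughout is exactly this delivery characterisation: because MD.2-style re-dissemination lets delivery cascade, one must show in ($\Leftarrow$) that the cascade does reach $v$ whenever the required almost-disjoint paths exist, and in ($\Rightarrow$) that it cannot cross $C$ once $f$ of its nodes are silent; getting the trusted-node capacities and the source boost right in the flow correspondence is the other delicate point.
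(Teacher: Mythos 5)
Your overall route is the same as the paper's: characterise delivery in \dut{} by the existence of $b$--$v$ paths that are vertex-disjoint up to shared trusted nodes, and use node-splitting with capacity $2f{+}1$ on trusted node-edges so that a flow of $2f{+}1$ in $dG$ certifies $f{+}1$ paths surviving any $f$ faults. You are in fact more careful than the paper on two points it glosses over: the role of the source reweighting of l.~\ref{line:modifyWeights} (which matters exactly when an untrusted broadcaster has a trusted neighbour), and the necessity direction, which you argue properly via a min cut of at most $2f$ untrusted node-edges and an explicit silent adversary, where the paper only asserts that ``there should be $2f{+}1$ paths.'' Those additions are welcome.

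One step would fail as written, though: the claim that under MD.2 an empty path reaches ``every neighbour or trusted-path successor of a node that has already delivered,'' so that ``delivered correct nodes act as fresh broadcasters.'' Taken literally this contradicts both your own structural claim and the theorem. Take $f{=}1$ and two internally disjoint length-two routes $b\text{--}x_1\text{--}v$ and $b\text{--}x_2\text{--}v$ with $x_1,x_2$ untrusted: the flow from $b_{\mathrm{out}}$ to $v_{\mathrm{in}}$ is $2<2f{+}1$, so the theorem declares the pair unable to communicate reliably; yet under the fresh-broadcaster reading the surviving relay always delivers (it is a neighbour of $b$) and its empty rebroadcast would make $v$ deliver, so \emph{RC-Validity} would hold --- breaking the ($\Rightarrow$) direction. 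In Bonomi et al.'s modifications the empty path relayed by a delivered node $q\neq b$ is interpreted by the receiver as the one-node path $[q]$ (this is precisely why MD.3/MD.4 work), not as $\emptypath$; only the broadcaster's own message, and routes consisting solely of trusted relays, trim to the empty path. Your structural claim ($v$ delivers iff there are $f{+}1$ $b$--$v$ paths avoiding $F$ that are disjoint up to trusted nodes) is the right one for the protocol so understood, and the rest of your argument --- including the induction showing the rebroadcast cascade cannot cross the uncorrupted part of the cut $C$ --- does not actually rely on the fresh-broadcaster claim, so the proof stands once that sentence is removed or corrected.
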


\begin{proof}
Let us remark that the flow from $u_{\mathrm{out}}$ to $v_{\mathrm{in}}$ in $dG$ is the same as the flow from $v_{\mathrm{out}}$ to $u_{\mathrm{in}}$ by construction. We, therefore, focus on the case where node $u$ sends a message to node~$v$.

\emph{($\Rightarrow$ Part 1/2)} The first delivery condition is node $v$ receiving an empty path. This implies that $u$ and $v$ are neighbors in $G$ or connected through a non-empty path of trusted nodes in $G$. In the second case, the flow from $u_{\mathrm{out}}$ to $v_{\mathrm{in}}$ is at least $2f{+}1$ in $dG$ because each edge from a trusted node has capacity $2f{+}1$. 

\emph{($\Rightarrow$ Part 2/2)} Node $v$ can also deliver a message if it obtains $f{+}1$ vertex-disjoint paths after removing the trusted nodes from the paths it receives. In the presence of up to $f$ faulty nodes, there should be $2f{+}1$ paths between $u$ and $v$ in $G$ that become vertex-disjoint once their trusted nodes are removed. These $2f{+}1$ paths each have a capacity greater than or equal to $1$.  They might share edges that originate from a trusted node and have a capacity of $2f{+}1$, but not their other edges, which have a capacity of $1$. Nonetheless, the existence of these $2f{+}1$ paths implies that the flow from $u$ to $v$ is at least $2f{+}1$ in $dG$. 

\emph{($\Leftarrow$ Part 1/2)} If $u$ and $v$ are neighbors in $G$ then they can always communicate reliably.  

\emph{($\Leftarrow$ Part 2/2)} If there is a flow larger than $2f{+}1$ between $u_{\mathrm{out}}$ to $v_{\mathrm{in}}$ in $dG$ then $u$ and $v$ might first be connected by a path of trusted nodes in $G$. In that case, they can communicate reliably. Otherwise, there are $2f{+}1$ vertex-disjoint paths that may or may not contain trusted nodes connecting $u_{\mathrm{out}}$ to $v_{\mathrm{in}}$ in $dG$. These paths correspond to paths in $G$ since an $\mathrm{out}$ node can only be reached by its corresponding $\mathrm{in}$ node in $dG$. In the presence of $f$ faulty nodes, at least $f{+}1$ paths in $G$ will therefore be dissemination paths and allow $v$ to receive and authenticate a message sent by $u$.

\hfill $\OpenBox[1pt]$
\end{proof}
\vspace{1mm}

As shown by Thm.~\ref{thm:4}, Alg.~\ref{alg:dolev_u_t_verif_2} 
determines whether all pairs $(u,
v)$ of nodes in a graph $G$ can communicate reliably with \dut{}. The
algorithm first computes the directed graph $dG$ from the input graph
$G$. Each node~$u$ is split into nodes $u_{\mathrm{in}} = 2u$ and
$u_{\mathrm{out}} = 2u+1$, which are connected with edge
$(u_{\mathrm{in}}, u_{\mathrm{out}})$ of capacity $2f{+}1$ if $u$ is
trusted or $1$ otherwise (Alg.~\ref{alg:dolev_u_t_verif_2},
ll.~\ref{l:dgBegin}--\ref{l:dgEnd}). The weighted edges of $dG$ are
then created from the edges of $G$: an edge $(u,v) \in E$ leads to two
edges $(u_{\mathrm{out}}, v_{\mathrm{in}})$ and $(v_{\mathrm{out}},
u_{\mathrm{in}})$, and edges that originate from a trusted node are
given capacity $2f{+}1$ and $1$ otherwise
(Alg.~\ref{alg:dolev_u_t_verif_2},
ll.~\ref{l:dgEdgeBegin}--\ref{l:dgEdgeEnd}).
Alg.~\ref{alg:dolev_u_t_verif_2} then studies all pairs of nodes $(u,v) \in V$. It first verifies whether $u$ and $v$ are neighbors in $G$ or if a path of trusted nodes connects them in $G$ (Alg.~\ref{alg:dolev_u_t_verif_2}, ll.~\ref{l:startconnected}--\ref{l:endconnected}). This verification requires distinguishing the various cases where $u$ and/or $v$ are trusted nodes, which uses functions $f$ and $g$ (cf. \S\ref{alg:dolev_u_t_verif_1}). Note that because the reliable communication abstraction assumes the broadcaster of a message to be correct, and therefore requires the weights of the outgoing edges of $u$ to be set to $2f+1$ (Alg.~\ref{alg:dolev_u_t_verif_2}, l.~\ref{line:modifyWeights}) and  later reset to their original values (Alg.~\ref{alg:dolev_u_t_verif_2}, l.~\ref{line:resetWeights}).  
The algorithm then verifies whether the maximum flow between $u$ and $v$ is at least $2f{+}1$ in $dG$. If all check fails, then the two nodes might not be able to communicate reliably in the presence of $f$ faulty nodes, and the algorithm returns \textsc{False} (Alg.~\ref{alg:dolev_u_t_verif_2}, l.~\ref{l:returnFalse}). If the checks pass for all pairs of nodes in $G$, then Alg.~\ref{alg:dolev_u_t_verif_2} concludes that \dut{} ensures \textit{RC-Validity} and therefore implements reliable communication in graph G. 

\noindent \textbf{Complexity.} 
App.~\ref{appx:method1}, computes the complexity of Alg.~\ref{alg:dolev_u_t_verif_2}, which is bounded by $\mathcal{O}(8{\cdot}f{\cdot}|V|^5)$.

\subsubsection{Method 2: Eliminating Unnecessary Trusted Nodes from \texorpdfstring{$G$}{G} and Checking for \texorpdfstring{$2f{+}1$}{2f+1} Vertex-Connectivity} 
\label{sec:dolev_u_t_verif_1}

Our second correctness verification method is more efficient when there are enough trusted nodes in the network. The method relies on Thms.~\ref{thm:untrusted_untrusted},~\ref{thm:trusted_untrusted} and~\ref{thm:trusted_trusted} and uses the aforementioned functions $f(\cdot)$ and $g(\cdot)$.

\begin{theorem}[\textit{RC-Validity} - untrusted  $\leftrightarrow$ untrusted]
\label{thm:untrusted_untrusted}
Two untrusted nodes $(u, v) \in (V_{\nauth} {\setminus} V_t)^2$ can
communicate reliably using \dut{} in $G$ iff. $u \in \Gamma_{f(G)}(v)$
or $\kappa_{f(G)}(u,v) \ge 2f{+}1$.
\end{theorem}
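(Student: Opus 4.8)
The plan is to prove \textit{RC-Validity} for the pair $(u,v)$: \textit{RC-No duplication} and \textit{RC-No creation} hold unconditionally by Lemma~\ref{lemma:1}, and the statement is symmetric in $u$ and $v$, so it suffices to fix an execution in which $u$ broadcasts a message $m$ and to decide when $v$ eventually delivers it. \dut{} makes $v$ deliver exactly when one of its two triggers fires: $v$ records the empty path for $m$, or $v$ records $f{+}1$ vertex-disjoint paths after trusted nodes have been stripped. The guiding idea is that, since every trusted node is correct and is exactly what \dut{} removes from recorded paths before the disjointness test, \dut{} run on $G$ between two untrusted endpoints behaves like \du{} run on $f(G)$: an edge of $f(G)$ is either a link of $G$ or a chain of (correct) trusted relays in $G$, and either way it acts as one authenticated link. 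The stated condition is then the pairwise version of \du{}'s $(2f{+}1)$-connectivity requirement, transported to $f(G)$.

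For the ``if'' direction I would first take $u \in \Gamma_{f(G)}(v)$. Then either $(u,v)\in E$, so $v$ receives $[m,\emptypath]$ directly from $u$ and the empty-path trigger fires, or $u$ and $v$ are joined in $G$ by a path of trusted nodes, which being entirely correct carries $m$ to $v$; as all of its internal nodes are trusted they are removed, so $v$ again records an empty path. Next take $\kappa_{f(G)}(u,v)\ge 2f{+}1$. By Menger's theorem there are $2f{+}1$ internally vertex-disjoint $u$-to-$v$ paths in $f(G)$; realizing each of their edges in $G$ yields $2f{+}1$ $u$-to-$v$ routes whose \emph{internal untrusted} nodes are pairwise disjoint (they may share trusted nodes, which is harmless). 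Since at most $f$ untrusted nodes are faulty and each lies on at most one of these routes, at least $f{+}1$ routes are all-correct; along each, $m$ reaches $v$, and the path $v$ records for it, once trusted nodes are stripped, is exactly that route's sequence of internal untrusted nodes, so $v$ records $f{+}1$ pairwise vertex-disjoint paths and delivers. This connects back to the max-flow characterization of Thm.~\ref{thm:4}.

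For the ``only if'' direction I would argue the contrapositive: assume $u\notin\Gamma_{f(G)}(v)$ and $\kappa_{f(G)}(u,v)\le 2f$. As $u$ and $v$ are non-adjacent in $f(G)$, Menger yields a $u$-to-$v$ vertex cut $C\subseteq V_{\nauth}\setminus V_t$ with $|C|\le 2f$; and because there is no trusted-only $u$-to-$v$ path in $G$, reading off internal untrusted nodes shows that any $u$-to-$v$ path of $G$ avoiding $C$ would produce a $u$-to-$v$ walk in $f(G)\setminus C$, which is impossible, so $C$ also separates $u$ from $v$ in $G$. I then let the adversary corrupt $\min(|C|,f)$ nodes of $C$ and make them silent about $m$. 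If $|C|\le f$ the whole cut is silent and no trace of $m$ reaches $v$. Otherwise write $C = A \uplus B$ with $|B| = f$ silent-and-corrupted and $|A| = |C|-f\le f$: every $u$-to-$v$ route that still carries $m$ must pass through a node of $A$, which is untrusted and survives path-stripping, so every path $v$ records contains an $A$-node; hence $v$ can never assemble $f{+}1$ vertex-disjoint paths, and it never records an empty path (an empty recorded path would contain no untrusted node). Thus $v$ never delivers $m$ although $u$ broadcast it, so \dut{} fails \textit{RC-Validity} for $(u,v)$.

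The step I expect to be the main obstacle is making the ``\dut{} on $G$ mirrors \du{} on $f(G)$'' correspondence precise enough to be invoked cleanly in both directions: verifying that \dut{}'s trusted-stripped disjointness test really coincides with counting internally-disjoint paths in $f(G)$ even when distinct $f(G)$-edges are realized through overlapping or repeated trusted relays in $G$, and verifying that the empty-path optimization --- and more generally the message-reduction rules MD.1--5 --- cannot shortcut a delivery at $v$ unless an empty dissemination path can be produced along a trusted-only route, i.e.\ unless $u\in\Gamma_{f(G)}(v)$. Both come down to the single fact that trusted nodes are correct and are precisely the nodes \dut{} deletes from paths, so I expect them to go through, but they want care rather than a hand-wave.
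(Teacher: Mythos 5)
Your proof is correct and its forward direction is essentially the paper's: adjacency in $f(G)$ yields an empty recorded path (either a direct link or a chain of correct trusted relays that is stripped away), and $2f{+}1$ internally disjoint $f(G)$-paths, realized in $G$, survive the loss of $f$ untrusted nodes to give $f{+}1$ disjoint trusted-stripped paths at the receiver. Where you genuinely diverge is the converse. The paper disposes of it in one sentence, asserting that delivery must have been triggered either by an empty path (hence adjacency in $f(G)$) or by $2f{+}1$ paths that become disjoint after stripping (hence the connectivity bound in $f(G)$); it never exhibits a failing execution. You instead argue the contrapositive with an explicit adversary: extract a Menger cut $C\subseteq V_{\nauth}\setminus V_t$ of size at most $2f$ in $f(G)$, show it also separates $u$ from $v$ in $G$ because a $C$-avoiding path of $G$ would project to a $C$-avoiding walk in $f(G)$, silence $f$ of its nodes, and observe that the surviving part $A$ of the cut, of size at most $f$, meets every recorded path, so neither delivery trigger can fire. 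This is the argument one actually needs to establish \emph{necessity} rather than mere sufficiency, and it buys a constructive witness of failure that the paper's proof lacks. One detail deserves the care you already anticipate: as written, Alg.~\ref{alg:dolev_u_t} inserts $path\setminus\{p_j\}$, so the recorded path omits the final relay $p_j$; your claim that every recorded path contains an $A$-node therefore needs the standard Dolev convention that the delivering neighbor counts toward path disjointness (otherwise a single surviving cut node adjacent to $v$ could be elided from many otherwise-disjoint recorded paths). This is a quirk of the paper's pseudocode rather than a flaw in your reasoning, but it is exactly the point at which the ``\dut{} on $G$ mirrors \du{} on $f(G)$'' correspondence must be pinned down.
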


\begin{proof}
\emph{($\Rightarrow$)} If $u$ and $v$ can communicate reliably using \dut{} in $G$, then (i) they receive each other's messages with an empty path, which implies that they are neighbors in $f(G)$, or (ii) they are connected through $2f{+}1$ paths that become vertex-disjoint once their trusted nodes are removed and that connect them in $f(G)$. \\
\indent \emph{($\Leftarrow$ Part 1/2)} Let us assume that $u$ and $v$ are neighbors in $f(G)$. By definition, either they are also neighbors in $G$, or they are connected by a path of trusted nodes in $G$. In both cases, they can communicate reliably in $G$ using \dut{} because they obtain empty paths on each other's messages and deliver them. \\
\indent \emph{($\Leftarrow$ Part 2/2)} Let us now assume that $u$ and $v$ are connected through $2f{+}1$ vertex-disjoint paths in $f(G)$. Each of these paths is a subpath of at least a path that exists in $G$ and may contain additional intermediary trusted nodes. \dut{} removes trusted nodes from received paths, which means that $u$ and $v$ can always collect at least $f{+}1$ vertex-disjoint paths (that exist in $f(G)$) and authenticate each other's messages.  
\hfill $\OpenBox[1pt]$
\end{proof}
\vspace{2mm}

\begin{theorem}[\textit{RC-Validity} - trusted $\leftrightarrow$ untrusted] \label{thm:trusted_untrusted} A trusted node $u_t \in V_t$ and an untrusted node $v \in V_{\nauth}{\setminus}V_t$ can communicate reliably in $G$ iff. $u_t \in \Gamma_{g (G, f(G), u_t )}(v)$ or $\kappa_{g (G, f(G), u_t)}(u_t,v) \ge 2f{+}1$.
\end{theorem}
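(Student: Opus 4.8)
The plan is to follow the proof of Theorem~\ref{thm:untrusted_untrusted}, the one new ingredient being the trusted endpoint $u_t$. Since $u_t \in V_t$ it never deviates from \dut{} and, on reception, it is stripped from dissemination paths by \texttt{remove\_all\_trusted} exactly like any other trusted relay; consequently the graph in which to measure the relevant connectivity is $g(G, f(G), u_t)$, which (Def.~\ref{def:gfg}) re-attaches $u_t$, and only $u_t$, to $f(G)$ with an edge to every direct $G$-neighbour of $u_t$ and to every node joined to $u_t$ by a path of trusted nodes in $G$. As Lemma~\ref{lemma:1} already provides \emph{RC-no duplication} and \emph{RC-no creation}, ``$u_t$ and $v$ communicate reliably'' reduces to \emph{RC-Validity} for this pair: the correct $v$ eventually delivers every message the (always correct) $u_t$ broadcasts, and $u_t$ eventually delivers every message a correct $v$ broadcasts. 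The stated condition is symmetric in $u_t$ and $v$, so for ($\Rightarrow$) it suffices to inspect one transmission direction, while for ($\Leftarrow$) both must be checked.

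For ($\Rightarrow$), suppose $u_t$ broadcasts a message $m$ that the correct $v$ delivers. In \dut{}, $v$ delivers only (a) upon obtaining an empty path, or (b) upon collecting $f{+}1$ vertex-disjoint paths after trusted relays have been removed. Case (a) can occur only if every relay on the route from $u_t$ to $v$ is trusted, i.e. $u_t$ and $v$ are $G$-neighbours or joined by a path of trusted nodes in $G$, equivalently $u_t \in \Gamma_{g(G,f(G),u_t)}(v)$. If that fails, $v$ must deliver through (b), and for this to be \emph{guaranteed} despite up to $f$ Byzantine nodes sitting on and killing routes, there must exist $2f{+}1$ $u_t$--$v$ routes in $G$ that become pairwise vertex-disjoint once their trusted relays are deleted; each such reduced route is, by construction of $f(\cdot)$ and $g(\cdot)$, a $u_t$--$v$ path of $g(G,f(G),u_t)$, so $\kappa_{g(G,f(G),u_t)}(u_t,v) \ge 2f{+}1$. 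Formally this last step is a Menger/cut argument: a vertex cut of size $\le 2f$ in $g(G,f(G),u_t)$ lifts to $\le 2f$ untrusted nodes of $G$ crossed by every reduced $u_t$--$v$ route; corrupting $f$ of them and having them drop everything leaves $v$ with at most $f$ vertex-disjoint correct reduced routes and no empty path, so $v$ never delivers.

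For ($\Leftarrow$), split on the two disjuncts. If $u_t \in \Gamma_{g(G,f(G),u_t)}(v)$, then by Def.~\ref{def:gfg} the source and the recipient (whichever of $u_t$ and $v$ plays each role) are $G$-neighbours or joined by a path of trusted nodes in $G$; in both cases every relay on the carrying route is trusted, \texttt{remove\_all\_trusted} collapses the carried path to $\emptypath$, and the recipient delivers --- and since $u_t$ is correct and the broadcaster is correct in the respective direction, this yields reliable communication. If instead $\kappa_{g(G,f(G),u_t)}(u_t,v) \ge 2f{+}1$, pick $2f{+}1$ pairwise vertex-disjoint $u_t$--$v$ paths in $g(G,f(G),u_t)$; each is the reduction of a $u_t$--$v$ route that exists in $G$ (possibly with extra trusted relays), and because \dut{} strips trusted relays on reception, the receiving endpoint records exactly these $2f{+}1$ vertex-disjoint paths whenever the carrying routes are correct. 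At most $f$ of the $2f{+}1$ routes contain a Byzantine node, so at least $f{+}1$ are all-correct, reach the endpoint, and yield $f{+}1$ vertex-disjoint stored paths, triggering delivery --- in both directions. This establishes \emph{RC-Validity} for the pair.

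I expect the main obstacle to be the path bookkeeping around the trusted endpoint: one must verify that the empty-path shortcut of Alg.~\ref{alg:dolev_u_t} fires in \emph{both} roles of $u_t$ --- it is immediate when $u_t$ is the broadcaster, but also has to hold when $u_t$ is merely the destination reached from an untrusted broadcaster through a chain of trusted relays --- and that the correspondence between vertex-disjoint paths of $g(G,f(G),u_t)$ and trusted-stripped $u_t$--$v$ dissemination routes of $G$ preserves disjointness, so that the $2f{+}1 \to f{+}1$ counting under the worst-case fault placement goes through cleanly. The remaining steps are routine adaptations of the proof of Theorem~\ref{thm:untrusted_untrusted}.
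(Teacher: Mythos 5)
Your proposal is correct and follows essentially the same route as the paper: the paper's own proof is a two-line reduction that substitutes $u \mapsto u_t$ and $f(G) \mapsto g(G,f(G),u_t)$ into the proof of Thm.~\ref{thm:untrusted_untrusted}, and your argument is exactly that substitution, spelled out in full (with the Menger-style cut argument for necessity and the per-disjunct, per-direction check for sufficiency being elaborations of the Thm.~\ref{thm:untrusted_untrusted} proof rather than a new strategy). The path-bookkeeping caveat you flag at the end is real but is equally unaddressed by the paper's proof, so it does not distinguish the two.
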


\begin{proof}
 $g \big(G, f(G), u_t \big)$ inserts trusted node $u_t$ in $f(G)$ and connects it with its untrusted neighbors in $G$ and with the untrusted nodes it can communicate with through a path of trusted nodes in $G$. Substituting $u$ by $u_t$, and $f(G)$ by $g \big(G, f(G), u_t \big)$ in the proof of Thm.~\ref{thm:untrusted_untrusted} proves this theorem. 
\hfill $\OpenBox[1pt]$
\end{proof}

\begin{theorem}[\textit{RC-Validity} - trusted  $\leftrightarrow$ trusted] \label{thm:trusted_trusted} 
Two trusted nodes $(u_{t},v_{t}) \in V_t^2$ can communicate reliably in $G$ iff. 
$u_t \in \Gamma_{g ( G, g (G, f(G), u_t ), v_t )}(v_t)$ or $\kappa_{g ( G, g (G, f(G), u_t ), v_t )}(u_t,v_t) \ge 2f{+}1$.
\end{theorem}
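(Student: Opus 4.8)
The plan is to reduce this statement to Theorem~\ref{thm:untrusted_untrusted} by the same substitution trick used for Theorem~\ref{thm:trusted_untrusted}, now applied to \emph{both} endpoints instead of one. First I would unfold the definition of $H := g\big(G, g(G, f(G), u_t), v_t\big)$ using Def.~\ref{def:gfg} twice: $g(G, f(G), u_t)$ adds $u_t$ to $f(G)$ and joins it to every untrusted node that is a neighbour of $u_t$ in $G$ or reachable from $u_t$ by a path of trusted nodes in $G$; applying $g$ again with $v_t$ adds $v_t$ to that graph and joins it, under the same condition, to every node of $V_0 \cup \{u_t\}$ --- in particular to $u_t$ itself exactly when $u_t$ and $v_t$ are neighbours in $G$ or connected by a trusted path in $G$. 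Hence $H$ has vertex set $(V_{n.auth}\setminus V_t)\cup\{u_t,v_t\}$, contains all edges of $f(G)$, and its two extra vertices are connected to the rest of $H$ precisely along direct edges and trusted paths of $G$. I would also remark in passing that the two nestings of $g$ commute, so the notation is unambiguous: the edge relation imposed by $g$ depends only on $G$, not on the order in which $u_t$ and $v_t$ are inserted.

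With $H$ in hand, I would substitute $u \mapsto u_t$, $v \mapsto v_t$, and $f(G) \mapsto H$ throughout the proof of Theorem~\ref{thm:untrusted_untrusted}, and check that each implication still goes through. For $(\Rightarrow)$: if $u_t$ and $v_t$ communicate reliably with \dut{}, then either $v_t$ delivers $u_t$'s message via an empty path --- which, since \dut{} strips trusted intermediaries, means $u_t$ and $v_t$ are neighbours in $G$ or linked by a trusted path in $G$, i.e.\ $u_t \in \Gamma_H(v_t)$ --- or $v_t$ collects $f{+}1$ vertex-disjoint paths after trusted-node removal, which, accounting for up to $f$ faulty nodes, requires $2f{+}1$ paths in $G$ that become vertex-disjoint once their trusted vertices are deleted; each such path is a path of $H$, so $\kappa_H(u_t,v_t)\ge 2f{+}1$. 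For $(\Leftarrow)$ the two cases mirror Theorem~\ref{thm:untrusted_untrusted}: if $u_t\in\Gamma_H(v_t)$ then a direct edge or a trusted path in $G$ delivers an empty path at both ends; if $\kappa_H(u_t,v_t)\ge 2f{+}1$ then these $2f{+}1$ $H$-paths lift to $2f{+}1$ paths in $G$ that \dut{}'s trusted-node removal turns into $2f{+}1$ vertex-disjoint paths, at least $f{+}1$ of which survive as genuine dissemination paths, and among those at least one is fault-free and authenticates the message.

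The point that needs care --- and the step I expect to be the main obstacle --- is arguing that inserting the trusted endpoints $u_t, v_t$ into $f(G)$ via $g$ is as harmless to the soundness argument as removing trusted \emph{intermediaries} was in Theorem~\ref{thm:untrusted_untrusted}. The key observation, which I would state explicitly, is that trusted nodes are by assumption correct and non-faulty, so a path of $H$ realized through a trusted intermediary of $G$, or starting/ending at $u_t$ or $v_t$, can be re-expanded into a path of $G$ whose added vertices are all correct; consequently the property ``at least one of $f{+}1$ vertex-disjoint paths is fault-free'' is preserved exactly as before, and the extra vertices contributed by $g$ can never be the faulty node that breaks authentication. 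Everything else is a mechanical re-reading of the proof of Theorem~\ref{thm:untrusted_untrusted} with the substituted symbols, exactly as in the proof of Theorem~\ref{thm:trusted_untrusted}.
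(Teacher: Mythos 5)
Your proposal is correct and follows essentially the same route as the paper: the paper also proves this theorem by substituting $u \mapsto u_t$, $v \mapsto v_t$, and $f(G) \mapsto g\big(G, g(G, f(G), u_t), v_t\big)$ into the proof of Thm.~\ref{thm:untrusted_untrusted}, exactly as done for Thm.~\ref{thm:trusted_untrusted}. You merely spell out the unfolding of the double application of $g$ and the fault-freeness of the inserted trusted endpoints, which the paper leaves implicit.
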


\begin{proof}
Once again the proof is immediate after substituting $u$ by $u_t$, $v$ by $v_t$ and $f(G)$ by $g \Big( G, g \big(G, f(G), u_t \big), v_t \Big)$ in the proof of Thm.~\ref{thm:untrusted_untrusted}. 

\hfill $\OpenBox[1pt]$
\end{proof}

\begin{algorithm}[ht]
\caption{Verification of \dut's \emph{RC-Validity} on a graph based on graph simplification and connectivity measurement (\dutverif{}, method 2).} 
\label{alg:dolev_u_t_verif_1}
\begin{algorithmic}[1]
\footnotesize 	
\State \textbf{Inputs:}
\State \hspace*{1em} $G=(V,E)$: undirected network topology.
\State \hspace*{1em} $V_t \subset V$: set of trusted nodes in $G$.
\State \hspace*{1em} $f$: max. number of Byzantine nodes.
\State \textbf{Output:}
\State \hspace*{1em} A boolean that indicates whether \dut is correct on $G$. 
\item[]

\State compute $f(G)$ \label{l:unionfind}

\State \textbf{forall} $(u,v) \in V \setminus V_t$ \textbf{and} $u < v$ \textbf{do}   \comalgo{Pairs of untrusted nodes} \label{l:untrusted_untrusted_begin}
    \State \hspace*{1em} \textbf{if} $\kappa_{f(G)}(u,v) < 2f{+}1$ \textbf{and} $u \notin \Gamma_{f(G)}(v)$ \textbf{then} \label{l:untrusted}
        \State \hspace*{2em} \textbf{return} $\textsc{False}$ \label{l:untrusted_untrusted_end}
\item[]

\State \textbf{forall} $u_t \in V_t$ \textbf{do} 
    \State \hspace*{1em} compute $G_{u_t} = g(G,f(G),u_t)$ \label{l:computeg}
    
    \State \hspace*{1em} \textbf{forall} $v \in V \setminus V_t$ \textbf{do} \comalgo{Pairs of trusted-untrusted nodes} \label{l:trusted_untrusted_begin}
        \State \hspace*{2em} \textbf{if} $\kappa_{G_{u_t}}(u_t,v) < 2f{+}1$ \textbf{and} $u_t \notin \Gamma_{G_{u_t}}(v)$ \textbf{then} \label{l:connecutv}
            \State \hspace*{3em} \textbf{return} $\textsc{False}$ \label{l:trusted_untrusted_end}
    
    \State \hspace*{1em} \textbf{forall} $v_t \in V_t \setminus \{ u_t \}$ s.t. $u_t < v_t$ \textbf{do} \comalgo{Pairs of trusted nodes} \label{l:trusted_trusted_begin}
        \State \hspace*{2em} compute $G_{u_t,v_t} = g(G,G_{u_t},v_t)$ \label{XXX} \label{l:gutvt}
        \State \hspace*{2em} \textbf{if} $\kappa_{G_{u_t,v_t}}(u_t,v_t) < 2f{+}1$ \textbf{and} $u_t \notin \Gamma_{G_{u_t, v_t}}(v_t)$ \textbf{then} \label{l:connecutvt}
            \State \hspace*{3em} \textbf{return} $\textsc{False}$ \label{l:trusted_trusted_end}

\item[]
\State \textbf{return} $\textsc{True}$
\end{algorithmic}
\end{algorithm}

\noindent \textbf{RC-Validity Verification Algorithm.}
Alg.~\ref{alg:dolev_u_t_verif_1} 
directly leverages Thms.~\ref{thm:untrusted_untrusted},~\ref{thm:trusted_untrusted}, and~\ref{thm:trusted_trusted} to verify whether \dut{} ensures reliable communication between any pairs of nodes on a given graph $G$, assuming a set $V_t$ of nodes are trusted and knowledge of the value $f$.
The algorithm first verifies whether any two untrusted nodes can communicate reliably, which is the case if they are at least $2f{+}1$ vertex-connected or neighbors in $f(G)$ (Alg.~\ref{alg:dolev_u_t_verif_1}, ll.~\ref{l:untrusted_untrusted_begin}--\ref{l:untrusted_untrusted_end}). If so, Thm.~\ref{thm:untrusted_untrusted} ensures that it will also be the case in $G$. Next, the algorithm similarly verifies whether each trusted node $u_t$ can communicate reliably with all untrusted nodes, i.e., if they are neighbors or if they are $2f{+}1$ vertex-connected (Alg.~\ref{alg:dolev_u_t_verif_1}, ll.~\ref{l:trusted_untrusted_begin}--\ref{l:trusted_untrusted_end}) in $G_{u_t} = g(G,f(G),u_t)$. If so, Thm.~\ref{thm:trusted_untrusted} ensures that they can also communicate reliably in $G$. Finally, the algorithm verifies that any two trusted nodes $u_t$ and $v_t$ can communicate reliably in $G_{u_t, v_t} = g(G,G_{u_t},v_t)$ (Alg.~\ref{alg:dolev_u_t_verif_1}, ll.~\ref{l:trusted_trusted_begin}--\ref{l:trusted_trusted_end}). Thm.~\ref{thm:trusted_trusted} then ensures that $u_t$ and $v_t$ can communicate reliably in $G$. If $\textsc{False}$ was not returned by the routine after executing all checks, the algorithm then concludes that reliable communication is enforced by \dut on graph $G$ and returns $\textsc{True}$. 

\noindent \textbf{Complexity.} 
App.~\ref{appx:method2} computes the complexity of Alg.~\ref{alg:dolev_u_t_verif_1}. When $|V_t|$ is small, $|V {\setminus} V_t|^5 \approx |V|^5$ dominates the complexity. When $|V_t|$ is sufficiently large, then the complexity is dominated by $|V_t|^2 {\cdot} |V| \approx |V|^3$. 

\section{\sft: Reliable Communication in the Authenticated Process Model with Trusted Nodes}
\label{sec:sigflood_t}

As a stepping stone towards our final protocol, we now consider the
system model where all nodes can use digital signatures (i.e.,
$V_{\auth} {=} V$ and $V_{\nauth} {=} \emptyset$). The state-of-the-art
reliable communication protocol in unknown networks with authenticated
processes uses network flooding to disseminate a payload signed by the
original broadcaster. In this protocol, the sender of a broadcast
signs its original message and sends it to all its neighbors. Upon
receiving a payload and a valid signature, nodes verify its
authenticity and forward both to all their neighbors that have not yet
transmitted the message~\cite{KermarrecS07}.

\subsection{Description}

We present the pseudocode of \sft, our reliable communication algorithm for the authenticated process model that leverages trusted nodes in unknown topologies, in Alg.~\ref{alg:floodingtrusted}. 
Note that we do not detail the handling of incorrect signatures for simplicity. 
In \sft, messages contain the payload data $m$ that is being broadcast, the identity of the sender $p_s$ and its signature $\sigma_s(m)$ (Alg.~\ref{alg:floodingtrusted}, l.~\ref{line:msgType2}).
A node that receives a message from a trusted neighbor immediately delivers it, as it considers that its trusted neighbor authenticated it.
Trusted components are not leveraged by \sft, since they do not have direct network access, which implies that their signature still needs to be verified. 

In Alg.~\ref{alg:floodingtrusted}, the broadcasting node initially sends its message along with its signature to all its neighbors (Alg.~\ref{alg:floodingtrusted}, l.~\ref{ft:begin}--\ref{ft:end}). 
When a node receives payload data along with a signature and the broadcaster’s identity, it either verifies the correctness of the signature or, if the message is from a trusted neighbor, accepts it immediately (Alg.~\ref{alg:floodingtrusted}, l.~\ref{ft:verif}). 
Upon receiving and successfully verifying a message for the first time, a node forwards the message, along with its signature and the sender's ID, to all its neighbors except the one from which the message was originally received.

Termination of \sft is guaranteed since each correct node forwards a given message at most once. \sft generates at most two messages per network link, which occurs when two neighbors simultaneously send a message to each other. In practice, it is therefore expected that \sft would generate significantly fewer messages than \dut on a given network topology, whose base version generates $O(N!)$ messages where $N$ is the system size.

\begin{algorithm}
\caption{\sft: Reliable communication in $(f{+}1)$-connected networks at process~$p_i$ in the presence of trusted processes in the authenticated process model. } \label{alg:floodingtrusted}
\begin{algorithmic}[1]
\footnotesize
\State \textbf{Parameters:}
\State \hspace*{1em} $f$ : max. number of Byzantine processes in the system.
\State \textbf{Uses:}
    \State \hspace*{1em} $\bullet$ Auth. async. perfect point-to-point links, instance \textit{al}.
    \State \hspace*{1em} $\bullet$ $G=(V,E)$: network topology
    \State \hspace*{1em} $\bullet$ $\sigma_i(m)$ is the signature of process $p_i$ over message $m$, and $\texttt{verif}(\sigma_j(m))$ verifies it. 
\item[]

\State \underline{\textbf{upon event} $\langle \sft{},\ \texttt{Init} \rangle$ \textbf{do}}
    \State \hspace*{1em} $delivered = \textbf{False}$
\item[]

\State \underline{\textbf{upon event} $\langle \sft{},\ \texttt{Broadcast}\ |\ m \rangle$ \textbf{do}} \label{ft:begin}
    \State \hspace*{1em} \textbf{forall} $p_j \in \Gamma_G(p_i)$ \textbf{do}
        \State \hspace*{2em} $\langle al, \texttt{Send}\ |\ p_j, [m, \sigma_i(m)] \rangle$
    \State \hspace*{1em} $delivered = \textbf{True}$
    \State \hspace*{1em} $\langle \sft{}, \texttt{Deliver}\ |\ m \rangle$ \label{ft:end}
\item[]

\State \underline{\textbf{upon event} $\langle al, \texttt{Receive}\ |\ p_j, [m, \sigma_s(m)] \rangle$ \textbf{do}} \label{line:msgType2}
    \State \hspace*{1em} \textbf{if} $delivered$ \textbf{then} \textbf{return}
    \State \hspace*{1em} \textbf{if} \textcolor{blue}{$\texttt{isTrusted}(p_j)$} \textbf{or} $\texttt{verif}(\sigma_s(m))$ \textbf{then} \label{ft:verif}
        \State \hspace*{2em} $delivered = \textbf{True}$
        \State \hspace*{2em} $\langle \sft{}, \texttt{Deliver}\ |\ m \rangle$ 
        \State \hspace*{2em} \textbf{forall} $p_k \in \Gamma_G(p_i) \setminus \{p_j,p_s\}$ \textbf{do}
            \State \hspace*{3em} $\langle al, \texttt{Send}\ |\ p_k, [m, \sigma_s(m)] \rangle$
\end{algorithmic}
\end{algorithm}

\subsection{Proofs}
\label{sec:sigflood_t_verif}

Lemma~\ref{lemma:8} shows that \sft{} implements the reliable communication properties. 
  
  \begin{lemma} \label{lemma:8}
\sft{} maintains \textbf{RC-no duplication} and \textbf{RC-no
  creation}, and \textbf{RC-Validity}.
\end{lemma}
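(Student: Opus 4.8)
The plan is to dispatch the three properties separately, in order of increasing difficulty, reusing the graph machinery of Section~\ref{sec:dut} for the liveness part.

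\emph{RC-no duplication} is immediate from the pseudocode: in Alg.~\ref{alg:floodingtrusted} every \texttt{Deliver} action is taken only while \texttt{delivered} is \textbf{False}, and is immediately followed by setting \texttt{delivered} to \textbf{True}; moreover the \texttt{Receive} handler returns early once \texttt{delivered} holds. Hence a correct process emits at most one \texttt{Deliver} event for a given payload.

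\emph{RC-no creation} will follow from the unforgeability of signatures together with the fact that a faulty node is never trusted (our fault model places faulty nodes in $V\setminus V_t$). I would argue by induction on the length of the relay chain that causes a correct process $p_i$ to deliver $m$ carrying the purported broadcaster $p_s$. If $p_i$ is the broadcaster the claim is trivial; if $p_i$ delivered because $\texttt{verif}(\sigma_s(m))$ succeeded (Alg.~\ref{alg:floodingtrusted}, l.~\ref{ft:verif}), then only $p_s$ could have produced $\sigma_s(m)$, so $p_s$ executed \texttt{RC-broadcast} on $m$. Otherwise $p_i$ delivered because the copy arrived from a \emph{trusted} neighbor $p_j$; since $p_j$ is correct and follows the protocol, it only forwards $(m,\sigma_s(m))$ after having itself delivered it, so the inductive hypothesis applies to $p_j$'s strictly shorter chain. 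Because every correct (in particular every trusted) relay either verifies the signature or received from a trusted neighbor, the chain is well founded and bottoms out at a genuine signature of $p_s$ on $m$, so $m$ was \texttt{RC-broadcast}.

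\emph{RC-Validity} is the crux and is, as for \dut{}, conditional on the network meeting \sft{}'s correctness condition (Thms.~\ref{thm:untrusted_untrusted}--\ref{thm:trusted_trusted} with $2f{+}1$ replaced by $f{+}1$). Let a correct $p_s$ \texttt{RC-broadcast} $m$: it signs $m$, delivers it, and sends $(m,\sigma_s(m))$ to all neighbors. Every correct node that receives a valid $(m,\sigma_s(m))$, or receives $m$ from a trusted neighbor, delivers $m$ and re-floods it to all of its other neighbors; in particular trusted nodes, being correct, never drop $m$. Consequently $m$ reaches every correct node $v$ for which some $p_s$--$v$ path in $G$ avoids all $\le f$ faulty nodes. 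I would then invoke the $f(\cdot)$/$g(\cdot)$ transformations exactly as in the proofs of Thms.~\ref{thm:untrusted_untrusted}--\ref{thm:trusted_trusted}: the correctness condition yields $f{+}1$ internally vertex-disjoint $p_s$--$v$ connections in the relevant transformed graph (or $p_s,v$ are neighbors), so at least one of them uses correct nodes only; reliability of the links and finiteness of the relay process then guarantee that $v$ eventually delivers $m$. The main obstacle is precisely this \emph{RC-Validity} step: (i) justifying that a \emph{single} all-correct path suffices here, so the threshold drops from $2f{+}1$ to $f{+}1$ connectivity (unlike \dut{}, one verified copy triggers delivery immediately), and (ii) carrying over the $f(G)$/$g(G,\cdot)$ bookkeeping for trusted-node shortcuts, including the trusted-neighbor fast path of l.~\ref{ft:verif}, from Thms.~\ref{thm:untrusted_untrusted}--\ref{thm:trusted_trusted} with the lowered threshold, while checking that the fast path is never abused since no faulty node is trusted. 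The safety parts are routine.
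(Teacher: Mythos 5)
Your proposal is correct and follows essentially the same route as the paper: \emph{RC-no duplication} via the \texttt{delivered} flag, \emph{RC-no creation} via signature unforgeability, and \emph{RC-Validity} conditional on the $(f{+}1)$-connectivity criterion checked by the verification algorithms with $2f{+}1$ replaced by $f{+}1$ (one all-correct path suffices since a single verified copy triggers delivery). Your treatment of \emph{RC-no creation} is in fact more careful than the paper's one-line argument, since your induction on the relay chain explicitly handles the trusted-neighbor fast path of l.~\ref{ft:verif}, where a node delivers \emph{without} verifying the broadcaster's signature --- a case the paper's proof silently elides.
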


\begin{proof}
\textit{RC-no duplication} is guaranteed by the use of the $delivered$ variable. \\

A message is delivered only when the signature of its creator is verified, which ensures \textit{RC-no creation.}  \\

Finally, \textit{RC-Validity} can be verified on a given network thanks to Alg.~\ref{alg:dolev_u_t_verif_1} or Alg.~\ref{alg:dolev_u_t_verif_2}. 
In the absence of trusted nodes, \sft{} requires the network to be at
least $(f{+}1)$ connected to ensure that the broadcaster's signature reaches every node. Conversely, \dut requires the network to be at least $(2f{+}1)$ connected. Verifying the correctness of \sft{} on a graph can be done using Alg.~\ref{alg:dolev_u_t_verif_1} or Alg.~\ref{alg:dolev_u_t_verif_2} by replacing all instances of $2f{+}1$ with $f{+}1$ in those algorithms.

\hfill $\OpenBox[1pt]$
\end{proof}
\vspace{1mm}

\section{\dualrc: Reliable Communication in the Hybrid Authenticated Links/Processes Model with trusted nodes and components}
\label{sec:dualrc}

We now consider the most generic reliable communication scenario that involves both non-authenticated and authenticated nodes. Our final algorithm, \dualrc,  builds on our two previous protocols, \dut{} and \sft{}, and therefore inherits their ability to leverage trusted nodes.  However, contrary to these two protocols, \dualrc{} also uses specific optimizations that require manipulating both dissemination paths and signatures. In particular, \dualrc{} leverages the presence of trusted components by allowing them to transform a set of signed and disjoint dissemination paths into a single trusted signature. We give the pseudocode of \dualrc{} in Alg.~\ref{alg:dualrc}, where the blue text is specific to the presence of trusted components. Once again, we omit the handling of incorrect signatures for simplicity.

\begin{algorithm*}
\caption{\dualrc: Reliable communication at process $p_i$ with trusted and TC-augmented processes in the hybrid link/process authenticated model. Modifications MD.1--5 are applied to messages that carry dissemination paths (cf. \S\ref{sec:bonomi_modifications}). Text that appears in blue is specific to the presence of trusted components.} 
\label{alg:dualrc}

\setlength{\columnsep}{-2.4cm} 
\begin{multicols}{2}

\begin{algorithmic}[1] 
\footnotesize 	

\State \textbf{Parameters:}
\State \hspace*{1em} $f$ : max. number of Byzantine nodes in the system.
\State \textbf{Uses:} 
\State \hspace*{1em} $\bullet$ Auth. async. perfect point-to-point links, instance \textit{al}.
\State \hspace*{1em} $\bullet$ $G=(V,E)$: network topology 
\State \hspace*{1em} $\bullet$ $\Gamma_G(p_j)$ returns the list of $p_j$'s neighbors. 
\State \hspace*{1em} $\bullet$ $\texttt{isTrusted}(p_j)$ indicates if $p_j$ is trusted.  
\State \hspace*{1em} $\bullet$ \textcolor{blue}{$\texttt{hasTC}(p_j)$ indicates if $p_j$ has a trusted component.}  
\State \hspace*{1em} $\bullet$ $\texttt{isAuth}(p_j)$ indicates if $p_j$ is authenticated. 
\State \hspace*{1em} $\bullet$ \textcolor{blue}{$\texttt{isTC}(\sigma)$ indicates if a signature was generated by a TC.} 
\State \hspace*{1em} $\bullet$ (\underline{Auth. nodes only}) $\sigma_i(m)$ is the signature of process $p_i$ \\ over message $m$, and $\texttt{verif}(\sigma_j(m))$ verifies it.
\State \hspace*{1em} $\bullet$ \textcolor{blue}{(\underline{Nodes equipped with a TC})  $\sigma_{{TC}_i}(m)$ is the signature \\of $p_i$'s TC.}
\item[]

\State \underline{\textbf{upon event}{$\langle \dualrc{}, \texttt{Init} \rangle$} \textbf{do}} \label{l:dualrcinit}
    \State $delivered = \textbf{False}$
    \State $paths = \emptyset$ \comalgo{Received dissemination paths}
    \State \textcolor{blue}{$sPaths = \emptyset$} 
    \comalgo{Received signed dissemination paths} 
    \State $dN = \emptyset$ \comalgo{Set of nodes that delivered (for MD.3)}
    \State $sigs = \emptyset$ \comalgo{Set of received signatures}
\item[]

\State \underline{\textbf{upon event} $\langle \dualrc{}, \texttt{Broadcast}\ |\ m \rangle$ \textbf{do}}
    \State \textbf{forall} $p_j \in \Gamma_G(p_i)$ \textbf{do}
        \State \hspace*{1em} \textbf{if} $\texttt{isAuth}(p_i)$ \textbf{then}
            \State \hspace*{2em} $\langle al, \texttt{Send}\ |\ p_j, [m, i, \sigma_i(m, i) ] \rangle$ \label{line:sendAuthBcast}
        \State \hspace*{1em} $\langle al, \texttt{Send}\ {|}\ p_j, [ m, i, \emptypath, \textcolor{blue}{\emptylist} ] \rangle$  \label{line:sendNAuthBcast}
    \State $\texttt{deliver}(m)$

\item[]
\State \underline{\textbf{function} $\texttt{deliver}(m)$}
\State \textbf{if} \textbf{not} $delivered$ \textbf{then}
\State \hspace*{1em} $delivered = \textbf{True}$
\State \hspace*{1em} $\langle \dualrc{}, \texttt{Deliver}\ |\ m \rangle$

\item[]
\State \comalgo{Receive a single signature}
\State \underline{\textbf{upon event} $\langle al, \texttt{Receive}\ |\ p_j, [ m, s, \sigma_l(m,s)] \rangle$ \textbf{do}} \label{l:msgType1}

    \State \textbf{if} $\sigma_l(m,s) \in sigs$ \textbf{then}  \textbf{return;}
     \textbf{else} $sigs {=} sigs {\cup} \{\sigma_l(m,s)\}$
    
    \State \textbf{if} $p_j \eqeq p_l$ \textbf{or} (\texttt{isTrusted}($p_j$) \textbf{and} \texttt{isAuth}($p_j$)) \textbf{then} \label{l:directDeliver}
        \State \hspace*{1em} $dN = dN \cup \{p_l\}$ 
        \State \hspace*{1em} $\texttt{deliver}(m)$

   \item[]  
   \State $dests = \Gamma_G(p_i) \setminus \{ p_s, p_l \}$
    \State \textbf{if} $isAuth(p_i)$ \textbf{then} \comalgo{try to deliver and broadcast own sig.}
        \State \hspace*{1em} $sPaths.\texttt{insert}(\{p_l\} \setminus \{p_s\})$ \label{line:singleSignToPath}
        \State \hspace*{1em} $sPaths\_cond {=} \exists \{f{+}1$ vertex-disjoint paths\}${\subset} sPaths$
        \State \hspace*{1em} \textbf{if} ($p_s \eqeq p_l$ \textbf{or} $\texttt{isTrusted}(p_l)$ \textcolor{blue}{\textbf{or}} \\ \hspace*{1em} \textcolor{blue}{$\texttt{isTC}(\sigma_l(m))$} \textbf{or} $sPaths\_cond$) \textbf{then} \label{line:deliverSign}
            \State \hspace*{2em} $\texttt{deliver}(m)$
            \State \hspace*{2em} \textbf{if} $\texttt{isTrusted}(p_l)$ \textcolor{blue}{\textbf{or}  $\texttt{isTC}(\sigma_l(m, s))$} \textbf{then}
                \State \hspace*{3em} \textbf{forall} $p_k \in dests$    \textbf{do} 
                    \State \hspace*{4em} $\langle al, \texttt{Send}\ |\ p_k, [m, s,\sigma_l(m,s) ] \rangle$ \comalgo{fwd sender's sig.} \label{l:fwdTrusted}
                \State \hspace*{3em} \textbf{return}

            \State \textcolor{blue}{\hspace*{2em} \textbf{else if} $\texttt{hasTC}(p_i)$ \textbf{then}} 
                \State \hspace*{3em} \textcolor{blue}{\textbf{forall} $p_k \in dests$ \textbf{do}} 
                    \State \hspace*{4em} \textcolor{blue}{$\langle al, \texttt{Send}\ |\ p_k, [m, s, \sigma_{TC_i}(m,s) ] \rangle$} \comalgo{send TC's sig.} \label{line:tcsign} \label{l:fwdTC}
                \State \hspace*{3em} $sigs = sigs \cup \{ \sigma_{TC_i}(m,s) \}$
                \State \hspace*{3em} \textbf{return}

            \State \hspace*{2em} \textbf{else} 
                \State \hspace*{3em} \textbf{forall} $p_k \in dests$ \textbf{do} 
                    \State \hspace*{4em} $\langle al, \texttt{Send}\ |\ p_k, [m, s,\sigma_i(m,s) ] \rangle$ \comalgo{send own sig.} \label{l:fwdSign}

                \State \hspace*{3em} $sigs = sigs \cup \{\sigma_i(m,s)\}$ 

   \item[]  
   \State \textbf{forall} $p_k \in dests$ \textbf{do}
       \State \hspace*{1em} $\langle al, \texttt{Send}\ |\ p_k, [m, s, \sigma_l(m,s)] \rangle$ \comalgo{forward (non-trusted) sig.} \label{l:finalFwd}

\columnbreak 

\State \comalgo{Receive a single path, with signatures on subpaths}
\State \underline{\textbf{upon event} $\langle al, \texttt{Receive}\ |\ p_j, [ m,s, path\textcolor{blue}{, List(pa, \sigma(pa, m, s))}] \rangle$ \textbf{do}} \label{l:msgType2} 
    \State \textbf{if} $delivered$ \textbf{or} $path \cap dN \neq \emptyset$ \textbf{then} \textbf{return} \comalgo{MD.5 and MD.4}
    \State \textbf{if} $path \eqeq \emptypath$ \textbf{then} $dN = dN \cup \{p_j\}$ \comalgo{Check if dN is correct - MD.3}
    
    \item[]
    \State \comalgo{Try to identify $f{+}1$ 
    signed vertex-disjoint paths.} 
    \State \textcolor{blue}{\textbf{if} $\texttt{isAuth}(p_i)$ \textbf{then}}
    \State \hspace*{1em} \textcolor{blue}{\textbf{forall} $pa, \sigma_l(pa, m, s) \in List(pa, \sigma(pa, m, s))$ \textbf{do}} \label{l:beginSignedPaths}
        \State \hspace*{2em} \textcolor{blue}{$sPaths.$\texttt{insert}(\texttt{rm\_all\_trusted\_TCs}$(pa{\cup}\{p_l\}))$} \label{line:signToPath3}
        \State \hspace*{2em} \textcolor{blue}{\textbf{if} $\emptypath \in sPaths$ \textbf{or} $\exists \{f{+}1$ vertex-disjoint paths$\} {\subset} sPaths$ \textbf{then}}
            \State \hspace*{3em} \textcolor{blue}{$\texttt{deliver}(m)$}
            \State \hspace*{3em} \textcolor{blue}{\textbf{forall} $p_k \in  \Gamma_G(p_i)$ \textbf{do}} 
                \State \hspace*{4em} \textcolor{blue}{\textbf{if} $\texttt{hasTC}(p_i)$ \textbf{then}}
                    \State \hspace*{5em} \textcolor{blue}{$\langle al, \texttt{Send}\ |\ p_k, [m, s, \sigma_{{TC}_i}(m,s) ] \rangle$}  \label{line:teeSignedPaths} 
                \State \hspace*{4em} \textcolor{blue}{\textbf{else}
                    \State \hspace*{5em} $\langle al, \texttt{Send}\ |\ p_k, [m, s, \sigma_i(m, s) ] \rangle$} 
            \State \hspace*{3em} \textcolor{blue}{\textbf{forall} $p_k \in \Gamma_G(p_i) {\setminus} dN$ \textbf{do}} \comalgo{MD.2 + MD.3}
                \State \hspace*{4em} \textcolor{blue}{$\langle al, \texttt{Send} | p_k, [m, s, \emptypath, \sigma_i(\emptypath, m, s) {+} List(pa, \sigma(pa, m, s)) ] \rangle$} 
           \State \hspace*{3em} \textcolor{blue}{\textbf{return}} \label{l:endSignedPaths}

\item[]
    \State \comalgo{$p_i$ is not authenticated or signed paths did not cause a deliver} 
    \State $rpath = path \cup \{ p_j \}$ \comalgo{path to forward} \label{l:dutpathb}
    \State $path = \texttt{remove\_all\_trusted}(rpath)$ \label{line:rmTrusted3}
    \State $paths.\texttt{insert}(path)$ 
    
\item[]
    \State \comalgo{try to deliver using dissemination paths} \label{line:delPaths3}
    \State \textbf{if} $\emptypath \in paths$ (MD.1) \textbf{or} $\exists \{f{+}1$ vertex-disjoint paths$\} {\subset} paths$ \textbf{then}
        \State  \hspace*{1em} \textbf{if} $\texttt{isAuth}(p_i)$ \textbf{then} %
            \State \hspace*{2em} \textbf{forall} $p_k \in \Gamma_G(p_i)$ \textbf{do} \comalgo{MD.2-3}
                \State \hspace*{3em} $\langle al, \texttt{Send}\ |\ p_k, [m, s, \sigma_i(m, s)] \rangle$
            \State \hspace*{2em} \textbf{forall} $p_k \in \Gamma_G(p_i) {\setminus} dN$ \textbf{do} \comalgo{MD.2-3}
                \State \hspace*{3em} $\langle al, \texttt{Send}\ |\ p_k, [m, s, \emptypath\textcolor{blue}{, \sigma_i(\emptypath, m, s) {+} List(pa, \sigma(pa, m, s))}] \rangle$ 
        \State \hspace*{1em} \textbf{else}
            \State \hspace*{2em} \textbf{forall} $p_k \in \Gamma_G(p_i) {\setminus} dN$ \textbf{do} \comalgo{MD.2-3}
                \State \hspace*{3em} $\langle al, \texttt{Send}\ |\ p_k, [m, s, \emptypath\textcolor{blue}{, List(pa, \sigma(pa, m, s))} ] \rangle$
                
        \State \hspace*{1em} $\texttt{deliver}(m)$
        \State \hspace*{1em} \textbf{return} \label{l:dutpathe}
        
    \item[]
    \State  \comalgo{If not delivered, forward message with modified path}
    \State  \textbf{forall} $p_l \in \Gamma_G(p_i) \setminus rpath$ \textbf{do} \label{line:finalRelay3b} 
        \State \hspace*{1em} \textbf{if} $\texttt{isAuth}(p_i)$ \textbf{then}
            \State \hspace*{2em} $\langle al, \texttt{Send}\ |\ p_l, [m, s, rpath\textcolor{blue}{, \sigma_i(rpath, m, s)} \textcolor{blue}{+ List(pa, \sigma(pa, m, s))} ]\rangle$ 
        \State \hspace*{1em} \textbf{else}
            \State \hspace*{2em} $\langle al, \texttt{Send}\ |\ p_l, [m, s, rpath\textcolor{blue}{, List(pa, \sigma(pa, m, s))} ]\rangle$ \label{line:finalRelay3e}

\end{algorithmic}

\end{multicols}

\end{algorithm*}

\subsection{Description}

\textbf{Message types.} \dualrc{} relies on both dissemination paths and digital signatures to allow all processes to authenticate a message. We describe in the following the two types of messages that \dualrc manipulates. \\
$\bullet$ A \emph{signature message} $[m, s, \sigma_l(m, s)]$ contains a payload $m$, the id $s$ of the original sender $p_s$ and a signature generated by process $p_l$ over $(m, s)$. All processes can manipulate and interpret signature messages to deliver messages.  \\
$\bullet$ A \emph{path message} $[m, s, path, List(pa, \sigma(pa, m, s))]$ contains a payload $m$, the id $s$ of the original sender $p_s$, an unsigned dissemination path $path$, and a possibly empty list of signed subpaths $List(pa, \sigma(pa, m, s))$. Path messages can only be interpreted by authenticated processes, but non-authenticated processes might have to forward them anyway. Modifications MD.1-MD.5 (cf. \S\ref{sec:bonomi_modifications}) are used when manipulating path messages. Note that, as an optimization, it would be possible to compress a list of signed paths to a single path accompanied by an onion signature. 

\textbf{Initialization.} \dualrc{} relies on variables that are set upon a node's initialization: The $delivered$ boolean indicates whether the message has been delivered (init. set to $\textsc{False})$; $paths$ and $signedPaths$ respectively contain the unsigned and the signed dissemination paths that have been received (init. set to $\emptyset$); $dN$ is the set of nodes that are known to have delivered the message (init. set to $\emptyset$); $sigs$ contains the signatures that have been received (init. set to $\emptyset$). 

\textbf{Broadcasting a message.} When a node $p_i$ initiates a broadcast, it sends a path message $[m,i,\emptypath, \emptylist]$ to all its neighbors (Alg.~\ref{alg:dualrc}, l.~\ref{line:sendNAuthBcast}). This message carries an empty dissemination path $\emptypath$ and an empty list $\emptylist$ of signed subpaths. If node $p_i$ is authenticated then it additionally sends a signature message  $[m, j, \sigma_i(m, i)]$ to its neighbors (Alg.~\ref{alg:dualrc}, l.~\ref{line:sendAuthBcast}). Node $p_i$ immediately delivers a message it broadcasts. 

Signature and path messages are handled by two distinct procedures that all types of nodes execute, and which we describe in the following. 

\textbf{Receiving a signature message $[m, s, \sigma_l(m, s)]$ (Alg.~\ref{alg:dualrc}, l.~\ref{l:msgType1}).} In \dualrc{}, a node generates its signature on a message only after it delivers it. Received signatures are stored in $sigs$, and are processed exactly once. A message can be delivered without verifying its signature if it is received directly from the broadcaster, or if it is relayed by a trusted and authenticated node (Alg.~\ref{alg:dualrc}, l.~\ref{l:directDeliver}). Authenticated nodes can also deliver $m$ using signature $\sigma_l(m,s)$ if it was generated by the broadcaster ($p_s == p_l$), by a trusted node or by component, or if they identify $f{+}1$ disjoint signed paths in $sPaths$ (a signature is interpreted as a single-node signed path, except for the broadcaster's signature that is interpreted as an empty path) (Alg.~\ref{alg:dualrc}, l.~\ref{line:deliverSign}). If so, a node then sends a signature to the subset $dests = \Gamma_G(p_i) \setminus \{ p_s, p_l \}$ of its neighbors, which might not have delivered $m$. If $\sigma_l(m,s)$ is a trusted signature then $p_i$ forwards it to the nodes in $dests$ and returns (Alg.~\ref{alg:dualrc}, l.~\ref{l:fwdTrusted}). Otherwise, if it is equipped with a trusted component then it makes it verify $\sigma_l(m,s)$ and generate a trusted signature $\sigma_{{TC}_i}(m,s)$ that it then sends to the nodes in $dests$ and returns (Alg.~\ref{alg:dualrc}, l.~\ref{l:fwdTC}). If it is authenticated, and if it has not forwarded or generated a trusted node's or component's signature, then node $p_i$ sends its own signature to the nodes in $dests$ (Alg.~\ref{alg:dualrc}, l.~\ref{l:fwdSign}). Finally, a node that has not returned forwards the signature it received to the processes in $dests$ (Alg.~\ref{alg:dualrc}, l.~\ref{l:finalFwd}). 

\textbf{Receiving a path message $[m, s, path,$ $List(pa, \sigma(pa, m, s))]$ (Alg.~\ref{alg:dualrc}, l.~\ref{l:msgType2}).}  A node might ignore a path message using MD.4 or MD.5 as in \dut{}. If $path == \emptypath$ then node $p_j$ is added to $dN$, which tracks the nodes that delivered a path message

If node $p_i$ is authenticated, then it tries to leverage each path $pa$ and signed path $\sigma_l(pa, m, s)$ contained in $List(pa, \sigma(pa, m, s))$ (Alg.~\ref{alg:dualrc}, ll.~\ref{l:beginSignedPaths}--\ref{l:endSignedPaths}). First, the path $pa \cup \{p\}]$ to which trusted signatures are removed is added to $sPaths$. If $sPaths$ contains an empty path (i.e., the broadcaster's signature was received) or $f{+}1$ vertex-disjoint paths then node $p_i$ delivers $m$. Node $p_i$ then broadcasts its trusted component's signature, if it is equipped with one, and otherwise broadcasts its own signature in a signature message. More precisely, node $p_i$ transfers the $f{+}1$ signed vertex-disjoint paths it identified to its trusted component so that it can verify their signature, check that they are disjoint, and if so return a signature $\sigma_{{TC}_i}(m,s)$ (Alg.~\ref{alg:dualrc}, ll.~\ref{line:tcsign} and~\ref{line:teeSignedPaths}). 
Node $i$ then also broadcasts a path message with an empty path to all its neighbors in $\Gamma_G(p_i) \setminus dN$ and returns. 

If node $p_i$ has not delivered $m$ using signed paths, it attempts to do so with $path$, as in \dut{} (Alg.~\ref{alg:dualrc}, ll.~\ref{l:dutpathb}--\ref{l:dutpathe}). This path is the received path to which $p_j$ has been added and from which all trusted nodes have been removed (Alg.~\ref{alg:dualrc}, l.~\ref{line:rmTrusted3}). Note that processes equipped with a trusted component cannot be removed from dissemination paths because unsigned dissemination paths are not manipulated by trusted components. If node $p_i$ delivers $m$ based on a dissemination path, it  then broadcasts its signature to all its neighbors in a signature message and broadcasts a path message that contains an empty path and the list of signed path it received to which an appended signed path has been added to the nodes in $\Gamma_G(p_i) \setminus dN$. If node $p_i$ is not authenticated then it only broadcasts a path message with an empty path and the received list of signed paths. A node that delivers using $path$ then returns. 

Finally, if node $p_i$ has not delivered payload $m$, it forwards the message it received with up to two modifications: it sets path $rpath = path \cup \{p_j\}$, and if it is authenticated, it appends $\sigma_i(rpath, m, s)$ to the list of signed paths (Alg.~\ref{alg:dualrc}, ll.~\ref{line:finalRelay3b}--\ref{line:finalRelay3e}). 

\textbf{Termination and Optimizations.} A node stops processing and forwarding path messages as soon as it delivers (as in \dut{}). 
However, a node still has to forward signatures after delivering to allow all authenticated nodes to deliver (App.~\ref{appx:counter_example} shows an example where this is necessary). Termination is ensured because a node processes and relays a signature at most once through the use of variable $sigs$. 

Limiting the dissemination of signatures further, in particular by non-authenticated nodes, is more challenging and our optimizations had to be left out of  Alg.~\ref{alg:dualrc} for space reasons. First, a signature should not be relayed to a neighbor that sent the broadcaster's signature, a trusted signature, or $f{+}1$ untrusted signatures. Additionally, authenticated nodes can stop forwarding signature messages after they have forwarded the broadcaster's signature, a trusted signature, or $f{+}1$ untrusted signatures, as transmitting more signatures would not carry more information (the message has been authenticated). Non-authenticated nodes can stop forwarding signatures after they have relayed $f+1$ different signatures that they can reliably trace back to $f{+}1$ different authenticated nodes that relayed them or emitted them. Indeed, each of these untrusted signatures has been verified by $f{+}1$ different nodes and at least one of them can correctly be assumed to have successfully and correctly verified the signatures. 

Another optimization consists in encoding the list of signed paths a message path contains as a single onion signature or as an aggregate signature. This would effectively reduce the bit complexity of \dualrc{}. As a dissemination path should not be transferred to a neighbor that delivered (MD.3), an (authenticated or unauthenticated) node can avoid sending path messages to a node that transmitted the broadcaster's signature, a trusted signature, its own signature or $f{+}1$ untrusted signatures.   
Finally, it might be possible to avoid creating unsigned path messages between authenticated nodes, but we believe that it would render \dualrc{}'s code significantly more complex.  

\subsection{Proofs}

\noindent \textbf{Delivery requirements.} 
\dualrc{}'s pseudocode (Alg.~\ref{alg:dualrc}) allows a node to deliver a message under two conditions (DR.1 \& DR.2):

\noindent [DR.1]: A non-authenticated node $v$ delivers a message from a broadcaster $u$ if at least one of subconditions NA.1 and NA.2 is met.
    \begin{description}
        \item[NA.1] Node $v$ computes an empty path after removing the trusted nodes it may contain. 
        \item[NA.2] Node $v$ computes $f{+}1$ paths that are vertex-disjoint after removing the trusted nodes they contain.
        \label{desc}
    \end{description}

\noindent [DR.2] An authenticated node $v$ delivers a message from a broadcaster $u$ if subconditions NA.1 or NA.2 are met or if at least one of subconditions A.1 or A.2 is met. 
    \begin{description}    
        \item[A.1] Node $v$ receives node $u$'s, a trusted node's or a trusted component's signature.
        \item[A.2] Node $v$ computes a combination of $f{+}1$ signatures or paths that are vertex-disjoint after removing the trusted nodes from the paths.
    \end{description}

\noindent Note that A.2 generalizes NA.2 for authenticated nodes. Unfortunately, conditions NA.1, NA.2. A.1, and A.2 do not directly lead to a correctness verification algorithm one could apply on a given graph. To reach this objective, we formulate equivalent delivery conditions in Thm.~\ref{thm:5}. These conditions assume a set $S$ of nodes known to have delivered a message broadcast by a process $u$ and require manipulation to the directed graph $dG$, generated as described in \S\ref{sec:dolev_u_t_verif_2}, so that max-flows can be computed. As a brief summary, graph $dG$ is obtained from $g$ using the following steps: (i) nodes are split into two, and edges are added between them with a capacity of $1$ if the original node is not trusted or $2f{+}1$ if the original node is trusted, and (ii) edges of the original graph $G$ are transformed into two directed edges in $dG$ of weight $1$ if their origin is not trusted or $2f{+}1$ if their origin is trusted. 

We introduce synthetic \emph{root} nodes that are
connected to different subsets of nodes in $S$ using weighted directed
edges (from a root to the nodes).
\begin{itemize}
\item Node $root_S$ is connected to all nodes in $S \cap V$ using edges of capacity $1$ (for untrusted nodes) or $2f{+}1$ for trusted nodes.
\item Node $root_{n.auth}$ is connected to all (non-authenticated) nodes in $S \cap V_{n.auth}$ with edges of weight $1$. 
\end{itemize}

\begin{theorem}\label{thm:5}
A node $v$ is guaranteed to always deliver a message that is broadcast by a (correct) node $u$ iff. at least one of the following conditions is verified. 
    \begin{description}
        \item[5.1] Nodes $u$ and $v$ are neighbors in $G$. 
        \item[5.2] The max flow from $root_S$ to $v$ in $dG$ is at least $2f{+}1$.
        
        \item[5.3] Nodes $u$ and $v$ are authenticated, and the max flow from $u$ to $v$ in $dG$ is larger than $f{+}1$.
        \item[5.4] Node $v$ is authenticated and there is at least one trusted and authenticated node in $S$ whose max flow in $dG$ to $v$ is greater than or equal to $f{+}1$. 
    \end{description}
\end{theorem}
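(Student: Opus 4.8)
The plan is to prove the biconditional by establishing each direction separately, reusing two facts that are already implicit in the proof of Thm.~\ref{thm:4}. First, via the node-splitting construction of $dG$ and the max-flow/min-cut theorem, the max flow between two nodes in $dG$ equals the maximum number of internally vertex-disjoint paths between the corresponding nodes of $G$, where each untrusted node may be used by at most one such path (split-edge capacity $1$) but a trusted node may be shared by up to $2f{+}1$ of them (split-edge capacity $2f{+}1$). Second, dually, any cut of value at most $2f$ in $dG$ uses no trusted split-edge, since a single one already has capacity $2f{+}1>2f$; hence such a cut is carried by at most $2f$ untrusted vertices of $G$ whose deletion disconnects the endpoints, with trusted nodes being ``transparent'' to the cut. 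Before the case analysis I would restate the operational delivery conditions NA.1, NA.2, A.1, A.2 of DR.1/DR.2, since the whole argument amounts to translating each of them into such a flow statement.

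For the ($\Leftarrow$) direction I would case-split on which of 5.1--5.4 holds. Case 5.1 is immediate: a neighbour of $u$ receives an empty path directly (and, if authenticated, $\sigma_u(m,u)$) and delivers by MD.1 / NA.1 (resp. A.1). Case 5.2: decompose the flow of value $\ge 2f{+}1$ from $root_S$ to $v$ into $2f{+}1$ paths that become vertex-disjoint once their trusted nodes are removed; each originates at a node of $S$, which has delivered and therefore re-injects the message into the dissemination, so along any all-correct path of the bundle $v$ accumulates one such disjoint path (or an empty path, if that bundle path is all-trusted); since at most $f$ of the $2f{+}1$ paths meet a faulty node, $v$ ends up with $\ge f{+}1$ disjoint paths and delivers by NA.2/A.2 (or NA.1). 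Cases 5.3 and 5.4 are the signature analogues: the broadcaster's signature (5.3), respectively the signature that a trusted authenticated node of $S$ emits upon delivering (5.4), is relayed verbatim by every correct node, so a flow of value $\ge f{+}1$ from the emitter to $v$ guarantees one all-correct carrier path along which the signature reaches $v$; being authenticated, $v$ delivers by A.1.

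For the ($\Rightarrow$) direction I would argue the contrapositive: assuming 5.1--5.4 all fail, construct a single execution in which $v$ never delivers. From the failure of 5.2, take an untrusted vertex cut $C$, $|C|\le 2f$, separating $S$ from $v$, and split it as $C_1 \uplus C_2$ with $|C_1|=\min(f,|C|)$, so $|C_2|\le f$. Let $C_1$ (padded up to $f$ arbitrary nodes if desired) be the Byzantine set; these nodes stay silent and never operate their trusted components. Then every route carrying information about $m$ from a delivered node to $v$ that avoids $C_1$ passes through one of the $\le f$ correct cut nodes of $C_2$ (trusted nodes merely relay), so $v$ can assemble at most $f$ vertex-disjoint correct dissemination paths and at most $f$ distinct ``beyond the cut'' untrusted signatures; this kills NA.1, NA.2 and the path/untrusted-signature component of A.2. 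The failure of 5.3 prevents $u$'s signature from ever reaching $v$, and the failure of 5.4 prevents any trusted authenticated node's signature from reaching $v$; together these rule out the remaining cases of A.1.

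The main obstacle is making that last paragraph airtight, and it splits into two coupled difficulties. (i) Trusted components: a correct TC-equipped node produces a trusted-equivalent signature, but only after it has itself received the broadcaster's signature or $f{+}1$ disjoint signed paths; I must therefore show that no correct node with a surviving route to $v$ can ever reach that state, which is precisely the failure of 5.2/5.3/5.4 applied to that node. (ii) $S$ is only a lower bound on the set of delivering nodes, so I must exclude the possibility that, in the adversarial execution, some node outside $S$ is nevertheless forced to deliver and then supplies $v$ with an extra disjoint path or a trusted/TC signature. The natural way to handle both is a simultaneous induction showing that the set of nodes that ever deliver in the constructed execution is contained in the least fixpoint of closing $S$ under conditions 5.1--5.4, and that this fixpoint omits $v$; tracking which paths and signatures the non-$S$ nodes on the $v$-side of the cut can accumulate is where the real bookkeeping lies.
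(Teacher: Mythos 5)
Your sufficiency direction ($\Leftarrow$) follows essentially the same route as the paper: case-split on 5.1--5.4, decompose the flow into $2f{+}1$ paths that become vertex-disjoint after trusted-node removal, observe that at most $f$ of them meet a faulty node, and treat 5.3/5.4 as the signature analogues with one all-correct carrier path among $f{+}1$ disjoint ones. Your necessity direction, however, is genuinely different. The paper argues it \emph{directly}: it enumerates the scenarios under which the operational conditions NA.1, NA.2, A.1 can be met and checks that each scenario forces one of 5.1--5.4 (e.g., ``in each case the flow from $root_S$ to $v$ is at least $2f{+}1$''). You instead argue the contrapositive by an explicit adversarial construction: extract an untrusted min-cut of size at most $2f$ from the failure of 5.2, corrupt $\min(f,|C|)$ of its vertices, and show that $v$ can then assemble at most $f$ disjoint paths and no qualifying signature. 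Your route is the more standard one for necessity claims and is strictly more informative, because it exhibits the bad execution rather than asserting that no delivery scenario exists; it also surfaces two real obligations that the paper's proof silently skips --- that a correct TC-equipped node on the $v$-side of the cut never reaches a state in which it can mint a trusted signature, and that $S$ must be read as the least fixpoint of the closure under 5.1--5.4 (otherwise the ``iff'' is false for a non-maximal $S$, since a node outside $S$ that is nonetheless guaranteed to deliver could feed $v$). The price is the bookkeeping you acknowledge: the simultaneous induction confining the set of delivering nodes to that fixpoint is not written out, and the reduction of an arbitrary capacity-$\le 2f$ edge cut in $dG$ to a cut consisting only of untrusted split-edges deserves one explicit line. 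Neither is a flaw in the plan --- both are standard --- but they are exactly where the remaining work lies.
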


\begin{proof}

\emph{($\Rightarrow$)}
Condition NA.1 can be met in two cases: (i) $u$ and $v$ are neighbors (condition 5.1), or (ii) they are connected by a path of trusted nodes (a subcase of condition 5.2).

Condition NA.2 can occur following several scenarios. First, if there are $2f{+}1$ vertex-disjoint paths from $u$ to $v$. Second, if at least $2f{+}1$ untrusted nodes deliver a message and, therefore, send an empty path to their neighbors as a result of which eventually $v$ always receives $f{+}1$ disjoint paths. Third, if $v$ is able to receive at least $2f{+}1$ non-empty dissemination paths that originate from trusted nodes. Finally, if $v$ should be able to receive a combination of $2f{+}1$ vertex-disjoint paths and signatures, that are also used as single-node paths. In each case, the flow from $root_S$ to $v$ in $dG$ is at least $2f+1$ (condition 5.2), which implies that $f{+}1$ nodes in $S$ (possibly including $u$) are correct and will broadcast empty paths.
Node $v$ will then always receive $f{+}1$ disjoint paths as a consequence of these delivering nodes. As dissemination paths are not relayed by
nodes that deliver a message, one could wonder whether $v$ will always receive these $f{+}1$ disjoint paths. This is indeed the case because a
delivering node located on one of the $f{+}1$ correct paths would
still broadcasts an empty path that would reach $v$. 

The first subcase of condition A.1 is expressed as condition 5.3, while the second one is expressed as condition 5.4. Note that the presence of trusted components in the network does not modify \dualrc{}'s connectivity requirement, as the host of a trusted component can still avoid sending messages. 

\emph{($\Leftarrow$)} Two nodes that are neighbors can always reliably communicate (condition 5.1). If condition 5.2 is verified then $v$ will receive $f+1$ disjoint dissemination paths. If condition 5.3 is verified then node $v$  will receive the signature of node $u$. Finally, if condition 5.4 is verified then node $v$ will receive a trusted signature.

\hfill $\OpenBox[1pt]$
\end{proof}

\textbf{\emph{RC-Validity} Verification Algorithm - \dualrcverif{}}
Our correctness verification algorithm relies on conditions 5.1--5.4 to maintain the
set of nodes $S$ that are guaranteed to be able to deliver a
broadcaster's message in the presence of up to $f$ faulty nodes.
Initially, $S$ only contains the broadcaster. \dualrcverif{} then
successively attempts to extend $S$ by identifying nodes that will
satisfy one of the conditions NA.1, NA.2, A.1, or A.2 until all nodes have
been added to it (in which case reliable communication is proven), or
until it cannot grow anymore (in which case there exists a set of $f$
nodes that, if faulty, would prevent some nodes from authenticating each other's messages). Checking
\dualrc{}'s correctness on a given graph requires executing our
algorithm for each possible broadcasting node.
    
\section{Related Work}
\label{sec:relatedwork}

\textbf{Reliable communication in the global fault model.} 
Several works studied the reliable communication problem assuming authenticated links in partially connected but unknown networks.
Dolev showed that correct processes can communicate reliably in the presence of $f$ Byzantine nodes if, and only if, the network graph is $(2f{+}1)$-vertex connected~\cite{dolev1981unanimity}. 
Their algorithm, \du, assumes an unknown network topology and disseminates a message along all possible dissemination paths. A variant of this algorithm leverages a routed dissemination of messages in known networks.
Bonomi et al. proposed several optimizations that drastically improve the performance of \du by reducing the overall number of messages~\cite{bonomi2019multi}.
In the authenticated process model, processes can generate a signature for each message they wish to send and rely on network flooding to reach every possible destination. In this work, we consider a static network with the global fault model and leverage both the \du and network flooding-based reliable communication algorithms.   

\textbf{Reliable communication in the local fault model.} 
Koo presented a broadcast algorithm under the $t$-locally bounded fault model~\cite{koo2004broadcast}, which was later coined the Certified Propagation Algorithm (CPA) by Pelc and Peleg~\cite{pelc2005broadcasting}. Maurer and Tixeuil have also defined weaker reliable communication primitives that benefit from higher scalability~\cite{maurer2014byzantine,maurer2014containing}. Recently, Bonomi et al.~\cite{bonomi2024reliable} established conditions for CPA's correctness in dynamic communication networks. 

\textbf{Network requirements for reliable communication.}  
In the authenticated link model, \du requires a $2f{+}1$ connected~\cite{dolev1981unanimity} communication network. This network connectivity requirement in the authenticated process model is lowered to $f{+}1$. 
Several works identified and refined graph-theoretic parameters for CPA to be correct in the local fault model~\cite{Ichimura2010,litsas2013graph,pelc2005broadcasting,tseng2015broadcast}.
In this work, we observe that the presence of trusted processes, along with both authenticated and non-authenticated processes, weakens the network connectivity requirements for reliable communication in the global fault model. Additionally, we identify sufficient conditions for this to occur.

\textbf{Hybrid Byzantine-crash fault model.}
Hadzilacos proposed a model where processes can fail either by halting or failing to send some messages~\cite{hadzilacos1986issues}. 
Thambidurai and Park~\cite{thambidurai1988interactive}, and Meyer and Pradhan~\cite{meyer1991consensus} considered the agreement problem under a hybrid fault model that involves crashing and Byzantine processes. 
Garay and Perry studied reliable broadcast and consensus when up to $f$ nodes are either Byzantine or fail by crashing~\cite{garay1992continuum}. 
Backes and Cachin described a reliable broadcast protocol that considers a hybrid model where up to $f$ Byzantine nodes and $t$ nodes can crash and recover~\cite{backes2003reliable}.

\textbf{Hybrid Byzantine-trusted fault model.}
More recently, Tseng et al. modified CPA to leverage trusted nodes~\cite{tseng2019reliable, tseng2020reliable}.
Abraham et al. described an authenticated multivalued consensus protocol in synchronous networks resilient to a mix of Byzantine and crash faults~\cite{abraham2022authenticated}.
Differently, the hybridization approach equips nodes that run an algorithm with trusted components to improve its resilience~\cite{verissimo2006travelling}.
Several consensus algorithms consider hybrid models that involve authenticated nodes that leverage a trusted component~\cite{correia2002efficient, decouchant2022damysus,Kapitza+al:eurosys:cheapbft:2012, Veronese+al:tc:MinBFT:2013}. 
 Nowadays, trusted components have been implemented with technologies such as Intel SGX~\cite{SGX}, ARM TrustZone~\cite{arm2009arm}, or TPMs~\cite{tpm2016}. As far as we know, our work is the first to leverage a trusted component for reliable communication in a network. 
    
\section{Conclusion}
\label{sec:conclusion}

We proposed novel reliable communication protocols for incomplete network topologies in the global fault model. 
We first extended the state-of-the-art protocols that have respectively been designed for the authenticated process and authenticated link models so that they can tolerate and leverage the possible presence of trusted processes. 
We then showed that these protocols can be combined into \dualrc{}, our novel reliable communication protocol that allows both authenticated and non-authenticated processes to communicate. \dualrc{} leverages the signatures of authenticated processes and the presence of trusted components or trusted processes.
\dualrc{} provides reliable communication over a larger set of network topologies than previous protocols. We also described methods that one can use to verify whether reliable communication is possible using our algorithms on a given network topology and evaluated their complexity. 

\printbibliography

\appendices

\section{Intuition: Leveraging Trusted Nodes in the Authenticated Links Model}
\label{appx:example_dut}

\begin{figure}[H]
    \centering
    \includegraphics[width=0.7\linewidth]{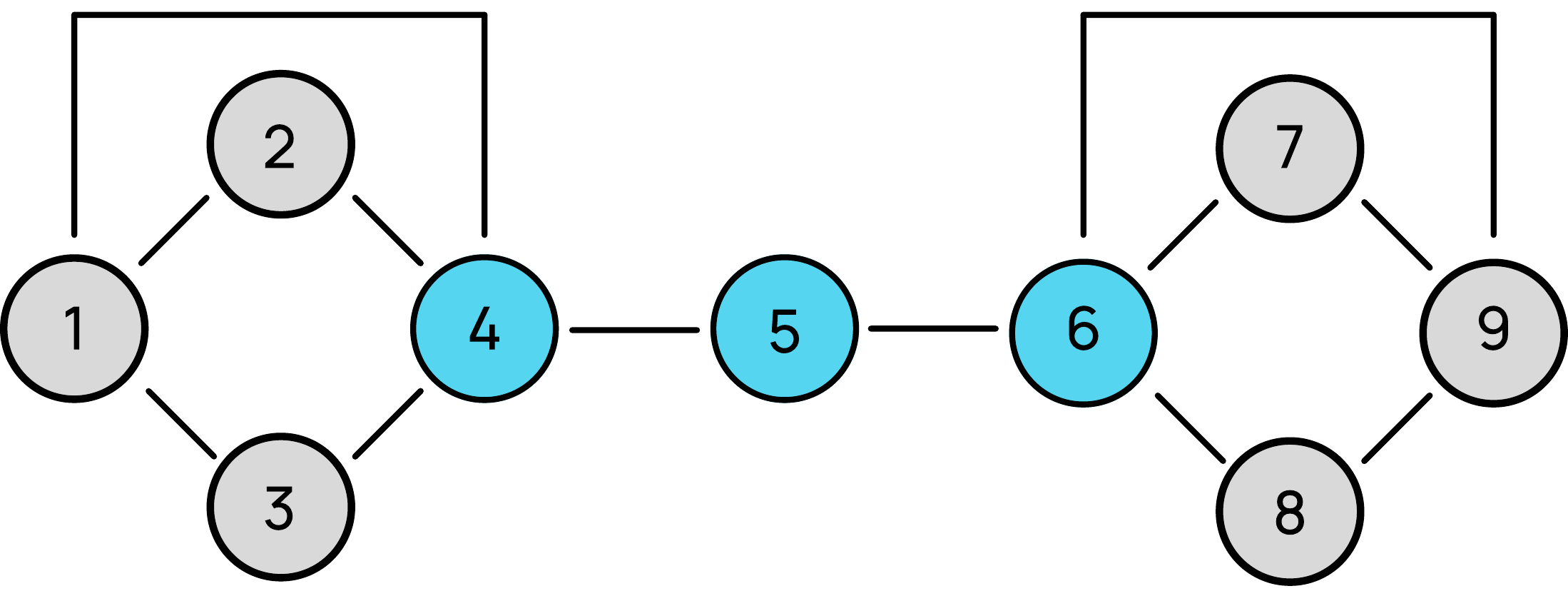}
    \caption{A network of 9 non-authenticated nodes (grey) that includes 3 trusted nodes (blue) where \dut enforces reliable communication while \du does not.}
    \label{fig:naturcExample}
\end{figure}

Fig.~\ref{fig:naturcExample} illustrates a network of non-authenticated nodes that contains 9 nodes, including 3 trusted (i.e., nodes 4, 5, and 6). In this network, reliable communication would be ensured by \dut with at most one faulty node (i.e., $f{=}1$), but not by \du. 

To motivate \dut, it is interesting to observe how a message broadcast by node 1 could be authenticated by node 9, assuming that node 1 is correct. The message would first eventually be authenticated by node 4 (directly or through both untrusted nodes 2 and 3). As node 4 is trusted, it would then vouch for its authenticity to node 5. Later on, node 5 would do the same with node 6, and node 6 would then directly disseminate the message to all its neighbors, including node 9. Since node 6 is trusted, node 9 is able to authenticate the message.
 
This example illustrates the fact that an uninterrupted path of trusted nodes is equivalent to a reliable link, which \dut{} leverages. In particular, a message that only goes through trusted nodes is authenticated. However, \dut{} also relies on the most general observation that trusted nodes can be removed from any position in the paths that \du{} manipulates to decide whether a message can be delivered.

\section{Counter-example: in \dualrc{}, a node must keep forwarding signatures even after it delivers using dissemination paths}
\label{appx:counter_example}

\begin{figure}[H]
\centering
\includegraphics[width=\columnwidth]{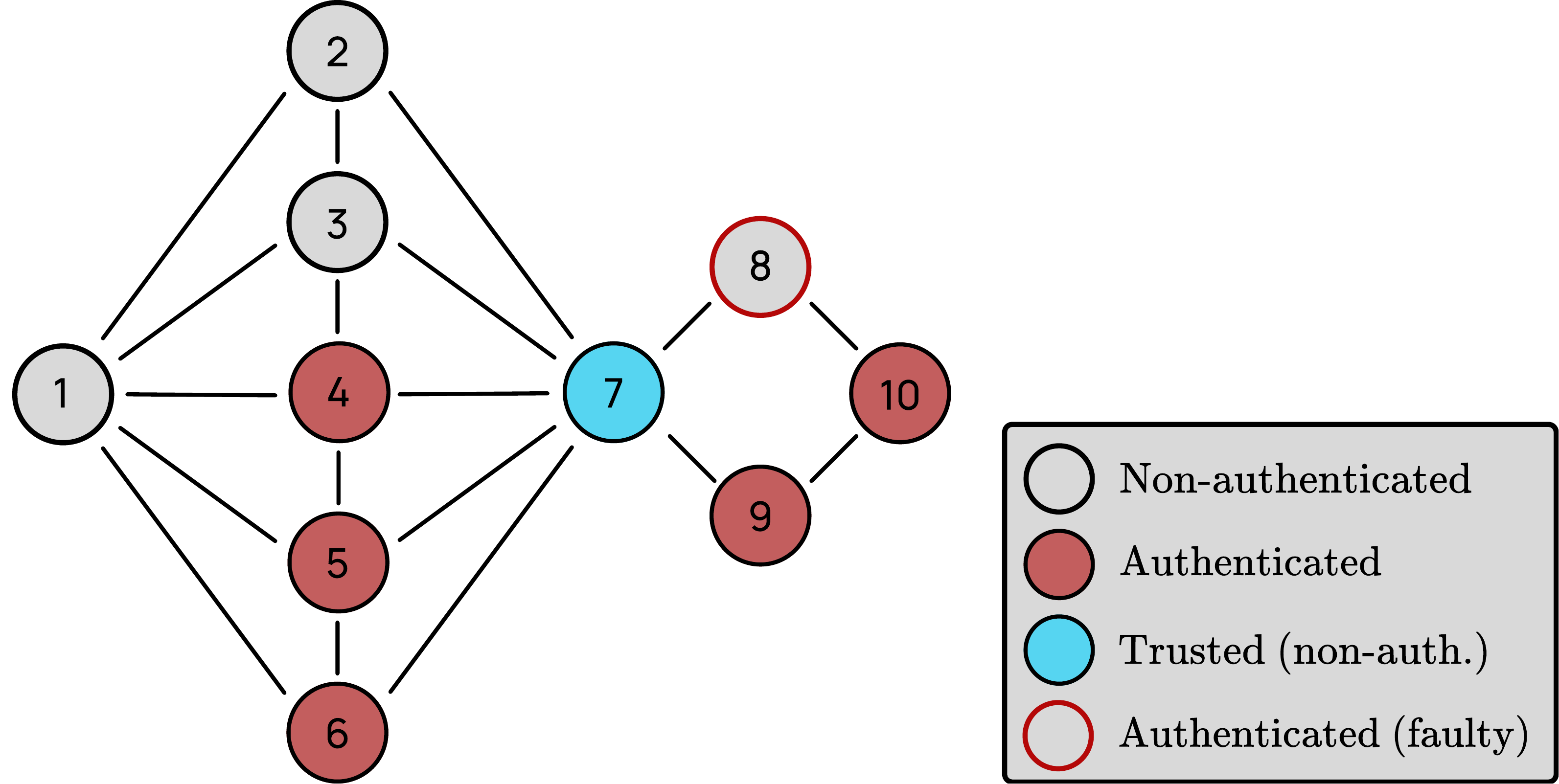}
\caption{Example of a network where not forwarding signatures after delivering a message based on dissemination paths would prevent some nodes from authenticating it. 
}
\label{fig:cex_signs}
\end{figure}

Without knowing the network topology, it is generally not possible for a node that delivers a message based on message paths to stop forwarding all signatures it might receive later.  
Fig.~\ref{fig:cex_signs} shows a graph where doing so can prevent a correct node from authenticating a message from another correct node. In this example, nodes $1$, $2$, $3$, and $7$ are not authenticated and, therefore, do not manipulate signatures (but can relay them) and authenticate messages only based on their dissemination paths. Nodes $4$, $5$, $6$, $8$, $9$ and $10$ are authenticated. Additionally, $7$ is trusted and $8$ is faulty. Node $1$ aims at sending a message $m$ to node $10$. Let us consider the following execution: 

\begin{itemize}
    \item Message $m$ arrives at nodes $2$ and $3$ that both deliver the message and forward it to their neighbors, including node $7$.
    \item Message $m$ arrives at node $7$ through node $2$ with an empty path. At this moment, node $7$ is not able to authenticate $m$ using only path $[2]$, but it forwards $m$ with a modified path.  
    \item Message $m$ arrives at node $7$ through node $3$ with an empty path. Node $7$ authenticates $m$ using paths $[2]$ and $[3]$, and forwards it with an empty path to nodes $8$ and $9$. After delivering $m$, node $7$ ignores all messages it might receive from nodes $4$, $5$, and $6$ that contain their signature on $m$. 
    \item Node $8$ is faulty and does not forward $m$ further.
    \item Node $9$ forwards $m$ with an empty path to node $10$, which cannot authenticate it based solely on $[9]$ and subsequently never authenticates $m$.
\end{itemize}

\noindent Instead, if $7$ keeps forwarding the signatures on $m$ it receives after delivering it based on paths, then $10$ would receive the signatures of nodes $4$, $5$ and $6$ and would authenticate $m$.  \\

\textbf{Remark.} Note that nodes do not have to keep forwarding paths after they have delivered a message using signatures. Indeed, upon delivering a message based on signatures, it is sufficient for a node to forward an empty path to all its neighbors and ignore the paths it might receive afterwards~\cite{bonomi2020boosting}.  

\section{Correctness Verification of \dut{}}
\label{appx:pseudocodes}

\subsection{Complexity of the first correctness verification method for \dut ~--- Max-flow on transformed graph (Alg.~\ref{alg:dolev_u_t_verif_2})}
\label{appx:method1}

The algorithm first requires (i) computing $f(G)$ once, which has
complexity $\mathcal{O}(|V|)$, (ii) computing $g(G,f(G),u_t)$
for each trusted node $u_t$, which has complexity
$\mathcal{O}(|V_t| {\cdot} |V|)$, and computing $g(G,
g(G,f(G),v_t), u_t)$ for each pair of trusted nodes $(u_t, v_t)$,
which has complexity $\mathcal{O}(|V_t|^2 {\cdot} |V|$.

The graph $dG$ has $2|V|$ vertices and contains less than $(2|V|)(2|V|-1)+2|V| = 2{|V|}^2$ edges. The max flow between any two nodes in $dG$ is bounded by $(2f{+}1){\cdot}2(|V|-2 + 1) = (2f{+}1){\cdot}2(|V|-1)$. Computing the max flow in $dgUV$, therefore, is $\mathcal{O}(8{\cdot}f{\cdot}|V|^3)$, which has to be done at most for all pairs of nodes that are not neighbors in $G$ or connected through a path of trusted nodes. The complexity of this procedure is bounded by $\mathcal{O}(8{\cdot}f{\cdot}|V|^5)$.

\subsection{Complexity of the second correctness verification method for \dut ~--- Eliminating unnecessary trusted nodes and checking for \texorpdfstring{$2f{+}1$}{2f+1} vertex-connectivity  (Alg.~\ref{alg:dolev_u_t_verif_1})}
\label{appx:method2}

\textbf{Complexity Analysis.} 
We analyze the complexity of the main steps of Alg.~\ref{alg:dolev_u_t_verif_1} in the following. 

Identifying the edges of $f(G)$ (Alg.~\ref{alg:dolev_u_t_verif_1},
l.~\ref{l:unionfind}) requires computing connected components, which
can be done using a union-find data
structure~\cite{DBLP:journals/cacm/GallerF64}. This data structure is
initialized with sets that each contain one vertex of $G{=}(V,E)$ and
is modified by considering all edges of the trusted nodes in $E$ and
merging the sets that contain their extremities. Let $E_{t} \in E$
contain the edges of $G$'s trusted nodes. Intuitively, a set in the
final union-find data structure contains the nodes that can
communicate using a (possibly empty) path of trusted nodes. Computing
the union-find data structure in a simple way and computing the edges
of $f(G)$ based on it has a linear complexity
$\mathcal{O}(|V|)$.

The neighbors of a node in $f(G)$ are its untrusted neighbors in $G$ 
and the untrusted nodes in the union-find sets that contain at least one of its trusted neighbors in $G$.
Determining the neighbors of all (untrusted) nodes in $f(G)$ based on the union-find data structure therefore has linear complexity $\mathcal{O}(|V|)$.

The number of edge-disjoint paths between two nodes is equal to the maximum flow between them when all edges have unitary capacity (Menger's theorem~\cite{menger1927allgemeinen}). To compute the lowest maximum flow between two nodes in graph $f(G){=}(V_0,E_0)$ one can use the classical Ford-Fulkerson algorithm~\cite{ford1956maximal} by setting all edge capacities to 1. Computing the maximum flow between two nodes in $f(G)$ has a complexity bounded by $\mathcal{O}(|E_0|{\cdot}\mathrm{maxFlow}(f(G)))$, where $\mathrm{maxFlow}(f(G))$ is the maximum flow in $f(G)$.
Evaluating whether two given untrusted nodes can communicate reliably in $f(G)$ (Alg.~\ref{alg:dolev_u_t_verif_1}, l.~\ref{l:untrusted}) therefore has overall complexity $\mathcal{O}(|V {\setminus} V_t|^2 {\cdot} \mathrm{maxFlow}(f(G)))$. Because the number of distinct pairs of edges in $f(G)$ is $\frac{|V{\setminus}V_t|( (|V|{\setminus}V_t| {-} 1)}{2} = \mathcal{O}(|V {\setminus} V_t|^2)$, and because the maximum flow in $f(G)$ is lower than $|V{\setminus}V_t|{-}2$, an upper-bound complexity to check that all pairs of untrusted nodes have a sufficient flow is therefore $\mathcal{O}(|V {\setminus} V_t|^5)$.

Adding a trusted node $u_t$ in $f(G)$ to compute $G_{u_t} {=} g(G,f(G),u_t)$ (Alg.~\ref{alg:dolev_u_t_verif_1}, l.~\ref{l:computeg}) requires merging some connected components of $f(G)$ by considering $u_t$'s edges ($\mathcal{O}(|V|)$ cost), and recomputing the neighbors of each node in $G_{u_t}$ by going through these connected components  ($\mathcal{O}(|V|)$ cost). These steps need to be executed for every trusted node, which results in $\mathcal{O}(|V_t| {\cdot} |V|)$ complexity.

Computing the vertex connectivity between two nodes in $G_{u_t}$ (Alg.~\ref{alg:dolev_u_t_verif_1}, l.~\ref{l:connecutv}) has complexity $\mathcal{O}(|V{\setminus}V_t|^3)$. This step is executed $|V_t| \cdot |V {\setminus} V_t|$ times (all pairs of trusted-untrusted nodes), which leads to an overall complexity of $\mathcal{O}(|V_t| \cdot |V {\setminus} V_t|^4)$. 

Building $G_{u_t,v_t}$ from $G_{u_t}$ (Alg.~\ref{alg:dolev_u_t_verif_1}, l.~\ref{l:gutvt}) has linear complexity $\mathcal{O}(|V|)$. Building such graphs is repeated $\mathcal{O}(|V_t|^2)$ times, for a complexity $\mathcal{O}(|V_t|^2 {\cdot} |V|$.

Evaluating the connectivity between two nodes in $G_{u_t,v_t}$ (Alg.~\ref{alg:dolev_u_t_verif_1}, l.~\ref{l:connecutvt}) has complexity $\mathcal{O}(|V{\setminus}V_t|^3)$ (using Ford-Fulkerson's algorithm~\cite{ford1956maximal}). This step is repeated $\mathcal{O}(|V_t|^2)$ times, which leads to an overall complexity of $\mathcal{O}(|V_t|^2 \cdot |V{\setminus}V_t|^3)$.

The overall complexity of Alg.~\ref{alg:dolev_u_t_verif_1} is bounded by the sum of all terms that appear inside boxes in this section, which is
\begin{multline*} 
\mathcal{O}(
2|V| + |V {\setminus} V_t|^5 + |V_t| {\cdot} |V| +
|V_t| \cdot |V {\setminus} V_t|^4 + |V_t|^2 {\cdot} |V| + |V_t|^2 \cdot |V{\setminus}V_t|^3
).
\end{multline*}

When $|V_t|$ is small, $|V {\setminus} V_t|^5 \approx |V|^5$ dominates. When $|V_t|$ is large, then it is $|V_t|^2 {\cdot} |V| \approx |V|^3$ that dominates. 

\end{document}